\documentclass[journal]{IEEEtran}

\usepackage[utf8]{inputenc}
\usepackage[T1]{fontenc}

\usepackage{amsmath,amssymb,amsfonts,amsthm}
\usepackage{mathtools}
\usepackage{bm}

\usepackage{cite}
\usepackage{microtype}
\usepackage{graphicx}
\usepackage{booktabs}
\usepackage{xcolor}
\usepackage{url}
\usepackage{enumitem} % Necessario per le liste formattate come [label=(\roman*)]

\newtheorem{theorem}{Theorem}

\newtheorem{proposition}[theorem]{Proposition}
\newtheorem{corollary}{Corollary}

\newcommand{\NDF}{\mathrm{NDF}}

\newcommand{\LAE}{L_{\mathrm{AE}}}
\newcommand{\Lmin}{L_{\min}}

\newcommand{\eps}{\varepsilon}

\newcommand{\NDFe}{\mathrm{NDF}_{\varepsilon}}

\newcommand{\norm}[1]{\left\lVert #1 \right\rVert}

\usepackage[unicode=true,
pdfencoding=auto,
colorlinks=true,
linkcolor=black,
citecolor=black,
urlcolor=blue]{hyperref}

\begin{document}

\title{Parametric Electromagnetic Information:\\
	Field-Manifold Geometry and the Stability of Learned Field Representations}

\author{M.~D.~Migliore
	\thanks{M. D. Migliore is with the DIEI (Department of Electrical
		and Information Engineering ``Maurizio Scarano''), University of
		Cassino and Southern Lazio, Via Di Biasio 43, I-03043 Cassino,
		Italy (e-mail: mdmiglio@unicas.it).}
	\thanks{He is also with European University of Technology (EUT+),
		CNIT (Consorzio Interuniversitario per le Telecomunicazioni) and
		Eledia Center.}
}

\maketitle

\begin{abstract}
	In many electromagnetic systems, the set of radiated or scattered
	fields is controlled by only a few physical parameters and therefore
	forms a low-dimensional manifold embedded in a high-dimensional
	observation space. This paper extends Electromagnetic Information
	Theory (EIT) to such parametric field families by separating two
	descriptors that classical linear NDF analysis merges: the
	Electromagnetic Intrinsic Dimension (EID), $d$, which counts the
	locally independent directions of field variation, and the Metric
	Stretching Exponent, $\nu$, which governs the electrical-size scaling
	of the intrinsic metric volume. Using Kolmogorov
	$\varepsilon$-entropy and $\varepsilon$-capacity, we derive lower
	bounds showing that stable non-linear representations depend not only
	on dimension but also on metric-volume growth, which reappears as a
	decoder-sensitivity burden. Under additive Gaussian noise, the
	pullback metric is proportional to the Fisher Information Matrix,
	linking the same geometry to Cram\'er--Rao estimation bounds.
	Physics-constrained autoencoders provide an operational estimate of
	the latent dimension required to achieve a prescribed reconstruction
	accuracy and an empirical proxy for the associated normalized decoder
	sensitivity. Array and scattering benchmarks show that systems with
	the same intrinsic dimension can exhibit different metric-growth laws
	depending on the physical modulation or scattering regime, while a
	dedicated steering-arc experiment provides finite-sample evidence that
	the best observed decoder-sensitivity proxy scales nearly linearly
	with metric length across electrical apertures, consistently with the
	predicted lower-bound trend.
\end{abstract}

\begin{IEEEkeywords}
	Electromagnetic theory, information theory, manifold learning, autoencoders, electromagnetic scattering, information geometry, stability analysis.
\end{IEEEkeywords}

\section{Introduction}
\label{sec:introduction}

Over the past decade, \textit{Electromagnetic Information Theory}
(EIT) has emerged as a framework bridging Maxwell's equations and
Shannon's information theory. Its classical descriptor is the linear
Number of Degrees of Freedom ($\mathrm{NDF}$), which bounds channel
capacity and imaging resolution under unconstrained access to the
significant field modes
\cite{miller1998spatial,poon2005degrees,bucci1989degrees,
	bucci1997electromagnetic,migliore2006role,janaswamy2011degrees,
	franceschetti2017wave,gustafsson2025degrees,gradoni2018stochastic,
	pizzo2022spatial,wan2023mutual,zhu2024electromagnetic}.

\begin{figure}[htbp]
	\centering
	\includegraphics[width=1.0\linewidth]{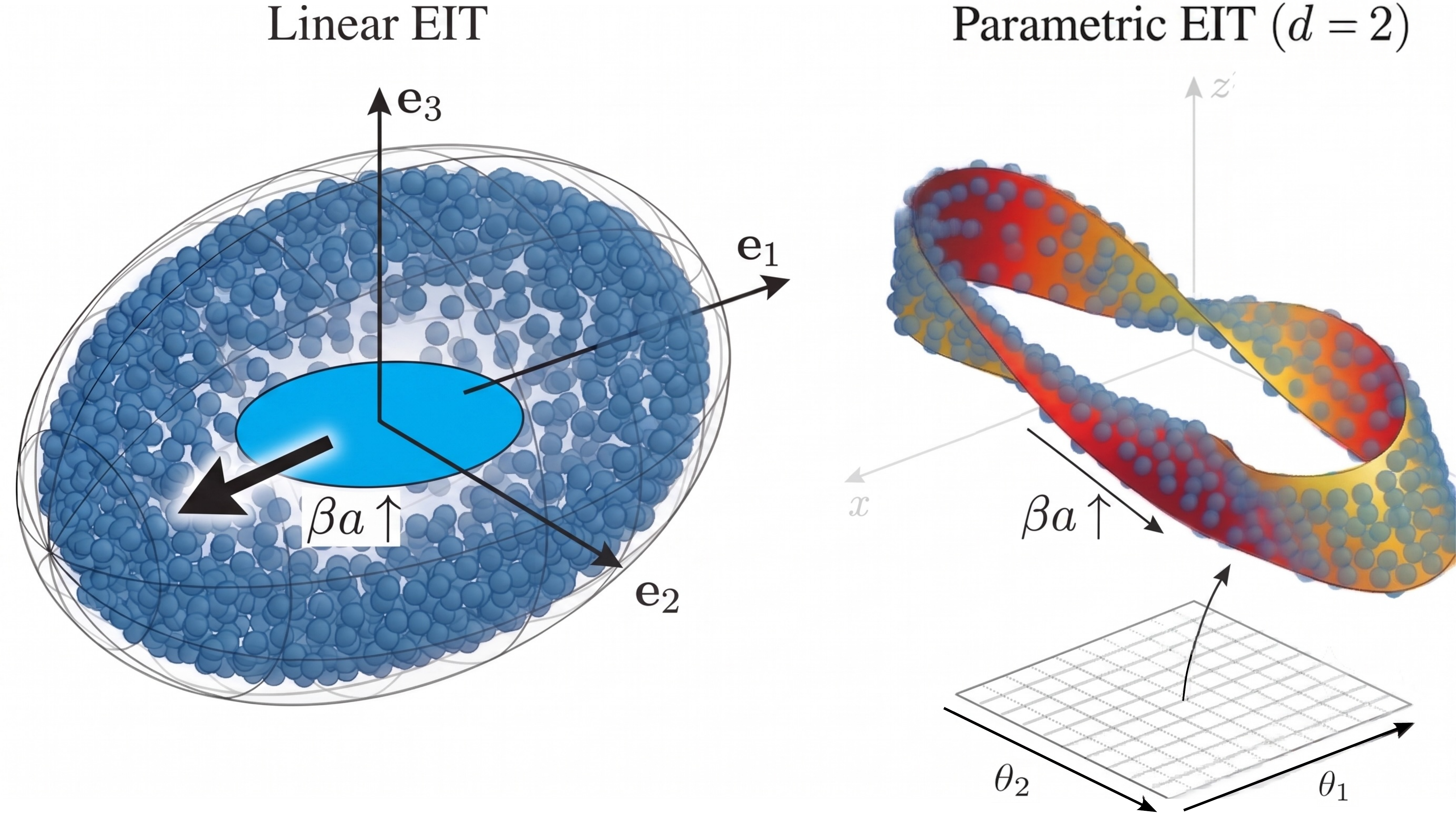}
	\caption{Accessible information space. Left: in linear EIT,
		increasing electrical size enlarges the effective dimension,
		quantified by the $\mathrm{NDF}$. Right: under a fixed parametric
		control topology, the intrinsic dimension $d$ remains unchanged
		while the embedded field manifold is stretched, accommodating more
		$\varepsilon$-resolution elements. The balls represent covering or
		packing elements according to whether $\varepsilon$-entropy or
		$\varepsilon$-capacity is considered.}
	\label{fig:manifoldcompr}
\end{figure}

This unconstrained model is often inappropriate for modern
high-frequency systems, where phase-only or otherwise restricted
controls are preferred to fully reconfigurable apertures. Such
constraints define a low-dimensional \textit{parametric control
	space}, for which the linear $\mathrm{NDF}$ may overestimate the
accessible field complexity, hardware requirements, and surrogate-model
dimension. Moreover, controls with the same intrinsic dimension may
scale differently with electrical size. As shown in
Sec.~\ref{sec:communication_theory}, phase-driven steering and
polarization rotation have the same one-parameter dimension but
different metric-length and stability scalings.

We address this gap by extending the Kolmogorov
$\varepsilon$-entropy and $\varepsilon$-capacity framework of
\cite{migliore2008electromagnetics} to parametric field families,
continuing the geometric EIT program of
\cite{migliore2026unification}. Machine Learning (ML) provides
constructive tools for representing strongly non-linear field
families, while physics-constrained architectures provide geometric
regularization and an operational measure of representation stability.

In the unconstrained linear regime, the accessible fields span a flat
subspace of $\mathcal H_E$, whose complexity grows through the
$\mathrm{NDF}$ (Fig.~\ref{fig:manifoldcompr}, left). Under parametric
control, the fields instead form a curved manifold
$\mathcal E\subset\mathcal H_E$
(Fig.~\ref{fig:manifoldcompr}, right). Its geometry is characterized
by two complementary descriptors: the Electromagnetic Intrinsic
Dimension ($\mathrm{EID}$), $d$, which counts the locally independent
directions of field variation, and the volumetric exponent $\nu$,
which governs the electrical-size scaling of the intrinsic metric
volume. Under a fixed regular control topology, $d$ remains constant,
whereas the metric volume may grow rapidly.

Their interplay yields the central stability trade-off of this work:
non-linear compression does not eliminate high-frequency complexity.
At fixed reconstruction accuracy, this complexity reappears as
decoder sensitivity, migrating from coordinate dimension into metric
volume and latent-to-field reconstruction sensitivity. We call this
mechanism \emph{Lipschitz migration} and quantify it through the
normalized decoder sensitivity $\Gamma_{\mathrm{dec}}$. Manifold
topology imposes a minimum embedding dimension $L_{\min}$, while
physics-constrained autoencoders \cite{hinton2006reducing} provide an
operational Autoencoder Latent Dimension ($\mathrm{ALD}$),
$L_{\mathrm{AE}}$, together with the corresponding non-linear
decoder.

The metric volume is evaluated through the pullback operation
\cite{frankel2004geometry}, which transfers the observation-space
metric to the parameter domain and reduces to the real Gram metric
for linear maps. Under additive Gaussian noise, the same pullback
metric is proportional to the Fisher Information Matrix, linking the
deterministic geometry to Cram\'er--Rao estimation bounds while
keeping decoder stability and estimation accuracy operationally
distinct. This connection is developed in
Sec.~\ref{sec:fisher_bridge}.

The framework is examined through three case studies. First, a
five-element parametric array isolates the role of control topology
and intrinsic dimension (Sec.~\ref{sec:array_case_study}). Second, a
dual-polarized array provides closed-form metric scalings for
phase-driven steering and polarization rotation
(Sec.~\ref{sec:communication_theory}). Third, a full-wave scattering
problem combines network-independent geometric tools and exact modal
analysis to identify the Born-to-geometric transition
(Sec.~\ref{sec:one_dimensional_manifolds}). Finally,
Sec.~\ref{sec:stability_validation} reports finite-sample
decoder-sensitivity scaling on a one-dimensional steering arc across
multiple electrical apertures.

\section{Application of the $\varepsilon$-entropy and $\varepsilon$-capacity to field ensembles on manifolds}
	\label{sec:entropy_capacity}
	
	The quantity at the center of this paper is the Kolmogorov
	$\varepsilon$-entropy of the parametric field set $\mathcal E$, together with its packing capacity counterpart \cite{tikhomirov1993varepsilon, migliore2008electromagnetics}.

\subsection{Metric Entropy and $\varepsilon$-Capacity of Electromagnetic Field Varieties}
\label{subsec:metric_entropy_definition}

Let $d\in\mathbb N$, and let $\mathcal H_E$ be a complex Hilbert space
equipped with the usual $L^2$ norm and representing the
field-observation space. Let
$\Theta\subseteq\mathbb R^d$ be the prescribed control-parameter domain,
and let
$\Phi:\Theta\to\mathcal H_E$
be the corresponding continuously differentiable
parameter-to-field map. For $\theta\in\Theta$, let $D\Phi(\theta)$
denote the differential of $\Phi$, represented in coordinates by its
Jacobian. We define
\begin{equation}
	\mathcal E\coloneqq\Phi(\Theta)\subset\mathcal H_E
\end{equation}
as the set of all electromagnetic field configurations generated as
the admissible control parameters vary over $\Theta$. We assume
$\mathcal E$ to be compact. At points where
\[
\operatorname{rank}_{\mathbb R}D\Phi(\theta)=d,
\]
the map $\Phi$ is an immersion and its image locally forms a
$d$-dimensional immersed field manifold.

When electrical size is varied, we denote by
$\Phi_{\beta a}:\Theta\to\mathcal H_E$ the parameter-to-field map at
electrical size $\beta a$ and define
\[
\mathcal E_{\beta a}\coloneqq\Phi_{\beta a}(\Theta),
\]
where $\beta$ is the free-space wavenumber and $a$ is a characteristic
linear dimension of the source, so that $\beta a$ denotes its electrical
size; this dependence is suppressed when no ambiguity arises.

For any resolution or noise threshold $\varepsilon>0$, we define the
following quantities \cite{tikhomirov1993varepsilon}:
\begin{itemize}
	\item The \emph{covering number} $N_\varepsilon(\mathcal E)$ is the
	minimum number of open balls of radius $\varepsilon$ in $\mathcal H_E$
	required to form a complete functional cover of $\mathcal E$:
	\begin{equation}
		N_\varepsilon(\mathcal E)
		\coloneqq
		\min\left\{
		k\in\mathbb N:
		\mathcal E\subset
		\bigcup_{i=1}^{k}
		B_{\mathcal H_E}(x_i,\varepsilon),
		\;
		x_i\in\mathcal H_E
		\right\},
	\end{equation}
	where $B_{\mathcal H_E}(x,\varepsilon)$ denotes the open ball of
	radius $\varepsilon$ centered at $x$ in $\mathcal H_E$;
	
	\item The \emph{packing number} $M_\varepsilon(\mathcal E)$ is the
	maximum cardinality of a subset of points in $\mathcal E$ that are
	mutually $\varepsilon$-separated, satisfying
	$\|E_i-E_j\|_{\mathcal H_E}>\varepsilon$ for any distinct pair
	$E_i,E_j\in\mathcal E$.
\end{itemize}

From these cardinalities, the Kolmogorov metric entropy is defined as
\[
H_\varepsilon(\mathcal E)
\coloneqq
\log_2N_\varepsilon(\mathcal E),
\]
whereas, at the same resolution scale, the $\varepsilon$-capacity is
defined as
\[
C_\varepsilon(\mathcal E)
\coloneqq
\log_2M_{2\varepsilon}(\mathcal E),
\]
both expressed in bits throughout this paper. Physically, metric
entropy measures \emph{representation complexity}---the minimum number
of bits describing an unknown element of $\mathcal E$ within distortion
$\varepsilon$---while $\varepsilon$-capacity measures
\emph{communication volume}---the maximum number of discrete field
states whose radius-$\varepsilon$ uncertainty regions do not overlap
under an isotropic noise floor $\varepsilon$.

The covering and packing cardinalities are related to one another up
to a factor of two in the resolution radius
\cite{tikhomirov1993varepsilon}:
\begin{equation}
	M_{2\eps}(\mathcal E)
	\leq
	N_\eps(\mathcal E)
	\leq
	M_\eps(\mathcal E),
	\label{eq:sandwich}
\end{equation}
so that in the infinitesimal limit $\eps\to0$, the $\eps$-capacity
$\log_2M_{2\eps}(\mathcal E)$ converges to the same scaling law as the
$\eps$-entropy $\log_2N_\eps(\mathcal E)$. The factor of $2$ separating
the two resolutions contributes only a bounded rate offset, negligible
relative to $\log_2(1/\eps)$, and all foundational metrics transfer
verbatim between the representation and communication domains
\cite{migliore2026unification}.

When $\mathcal E$ is a smooth, compact $d$-dimensional parametric family
of electromagnetic field configurations ($d$ being the intrinsic
dimension anticipated in Sec.~\ref{sec:introduction} and defined
precisely in Sec.~\ref{subsec:data_driven_ae}), the metric balls arrange
themselves along the curved geometry of the manifold, and the covering
cardinality is asymptotically equivalent to the ratio between the
global intrinsic Riemannian volume of the variety and the volume of a
$d$-dimensional Euclidean ball of radius $\eps$
\cite{tikhomirov1993varepsilon}:
\begin{equation}
	N_\varepsilon(\mathcal E)
	\approx
	\frac{\mathrm{Vol}_d(\mathcal E)}
	{\mathrm{Vol}_d\!\bigl(B_d(\varepsilon)\bigr)}
	=
	\frac{\mathrm{Vol}_d(\mathcal E)}
	{V_d\varepsilon^d},
\end{equation}
where $\mathrm{Vol}_d(\mathcal E)$ denotes the intrinsic
$d$-dimensional Riemannian volume of $\mathcal E$,
$B_d(\varepsilon)\subset\mathbb R^d$ is the Euclidean ball of radius
$\varepsilon$, and
\[
V_d
=
\mathrm{Vol}_d\!\bigl(B_d(1)\bigr)
=
\frac{\pi^{d/2}}{\Gamma(1+d/2)}
\]
is the volume of the unit ball in $\mathbb R^d$, with $\Gamma$ denoting
the Euler Gamma function. Taking the base-2 logarithm separates the
contributions of the resolution and the geometry:
\begin{equation}
	\begin{aligned}
		H_\varepsilon(\mathcal E)
		&=
		\log_2\left(
		\frac{\mathrm{Vol}_d(\mathcal E)}
		{V_d\varepsilon^d}
		\right)\\
		&=
		d\log_2\left(\frac{1}{\varepsilon}\right)
		+
		\log_2\mathrm{Vol}_d(\mathcal E)
		+
		\mathcal O(1).
	\end{aligned}
	\label{eq:intro-entropy}
\end{equation}

Accordingly, the \emph{slope} with respect to
$\log_2(1/\varepsilon)$ gives the intrinsic dimension $d$, whereas the
\emph{intercept} encodes the Riemannian volume
$\mathrm{Vol}_d(\mathcal E)$ up to dimension-dependent and bounded
geometric terms. This representation-independent law is the backbone
of the paper: as an invariant of the field set, it does not depend on
any specific basis and allows the linear NDF, the intrinsic dimension,
and constructive reduced-order representations to be interpreted as
distinct descriptions of the same underlying entropy structure.

Let
\[
\sigma_1\geq\sigma_2\geq\cdots\geq0
\]
denote the singular values of the radiation operator. At resolution
$\varepsilon$, let
\begin{equation}
	\NDFe
	\coloneqq
	\#\left\{
	n:
	\frac{\sigma_n}{\sigma_1}>\varepsilon
	\right\},
	\label{eq:ndf-def}
\end{equation}
where $\#$ denotes cardinality. Thus, $\NDFe$ is the number of singular
modes of the radiation operator that remain significant at the
prescribed resolution.\footnote{See
	\cite{migliore2026unification} for a more detailed discussion of
	$\NDFe$.}

For unconstrained source currents, let
$\mathcal E_{\mathrm{lin}}\subset\mathcal H_E$ denote the corresponding
linear field family. The effective dimension of this linear space is
governed by $\NDFe$. For one- and two-dimensional sources,
respectively,
\[
\NDFe\propto(\beta a)^q,
\qquad
q=1\ \text{or}\ 2.
\]
The corresponding linear $\varepsilon$-entropy is
\cite{migliore2008electromagnetics}:
\begin{equation}
	\begin{aligned}
		H_\varepsilon(\mathcal E_{\mathrm{lin}})
		&=
		\NDFe\log_2\left(\frac{1}{\varepsilon}\right)
		+
		\sum_{n=1}^{\NDFe}\log_2\sigma_n\\
		&=
		C'(\beta a)^q
		\log_2\left(\frac{1}{\varepsilon}\right)
		+
		\mathcal O\bigl((\beta a)^q\bigr),
	\end{aligned}
	\label{eq:intro-entropy-ndf}
\end{equation}
where $C'>0$ collects the geometry- and resolution-dependent
proportionality factor and is independent of $\beta a$ in the
considered scaling regime.

Now suppose instead that the intrinsic dimension $d$ is
\emph{rigidly fixed} by the number of independent control parameters.
As the electrical size increases, the highly oscillatory nature of the
radiated fields induces a geometric stretching of the manifold
$\mathcal E$; but since $d$ is \emph{independent} of the electrical
dimensions of the source, this dependence must be completely absorbed
by the functional volume. Assuming a power law
\[
\mathrm{Vol}_d(\mathcal E_{\beta a})
\propto
(\beta a)^\nu,
\]
substitution into \eqref{eq:intro-entropy} parameterizes the
$\varepsilon$-entropy by the volumetric stretching exponent $\nu$:
\begin{equation}
	H_\varepsilon(\mathcal E_{\beta a})
	=
	d\log_2\left(\frac{1}{\varepsilon}\right)
	+
	\nu\log_2(\beta a)
	+
	\mathcal O(1).
	\label{eq:intro-entropy-nu}
\end{equation}

Equations \eqref{eq:intro-entropy-ndf} and
\eqref{eq:intro-entropy-nu} expose a structural duality: both retain a
diverging $\log_2(1/\varepsilon)$ term as $\varepsilon\to0$, but
electrical size drives the \emph{slope} in the linear case
($\NDFe\propto(\beta a)^q$, unbounded) versus the metric
\emph{intercept} $\nu\log_2(\beta a)$ in the parametric case, where the
slope stays frozen at $d$.

For a communication system, identifying the resolution threshold
$\varepsilon$ with the receiver noise floor and parameterizing the
high signal-to-noise ratio (SNR) regime as
\[
\log_2(1/\varepsilon)
=
\frac{1}{2}\log_2\mathrm{SNR}
+
\mathcal O(1),
\]
the $\varepsilon$-capacity expansion derived from the packing
cardinality $M_{2\varepsilon}(\mathcal E)$ yields:
\begin{equation}
	C_\varepsilon(\mathcal E)
	=
	\log_2M_{2\varepsilon}(\mathcal E)
	\approx
	\frac{d}{2}\log_2\mathrm{SNR}
	+
	\log_2\mathrm{Vol}_d(\mathcal E)
	+
	\mathcal O(1).
	\label{eq:shannon-mapping}
\end{equation}

The intrinsic dimension $d$ plays the same role as the number of
parallel spatial channels in a classical MIMO capacity expansion: as a
\emph{pre-log factor} that directly multiplies
$\log_2\mathrm{SNR}$, it dictates the slope of the capacity curve as the
receiver noise floor decreases, representing the system's multiplexing
capability. By contrast, the volume term
$\log_2\mathrm{Vol}_d(\mathcal E)$ does not affect this slope; instead,
it acts as a \emph{rate offset} that shifts the entire capacity curve
vertically for all SNR values, structurally analogous to an array or
power gain. Unlike classical array gain, however, its growth rate is
not fixed: it is governed by the exponent $\nu$, which depends entirely
on the underlying physical mechanism through which the parameters
modulate the electromagnetic field.

This distinction between slope and offset is the rate--hardware trade-off,
made quantitative. Accessing all $\NDFe\propto(\beta a)^q$ significant modes
of an unconstrained linear family generally requires a control architecture
whose hardware complexity grows without bound with the aperture. A
fixed-$d$ control law---a phase-only steered array, say---instead keeps the
high-SNR pre-log frozen at $d/2$ however large the aperture becomes, and
pays the price of scaling entirely in the additive offset $\nu\log_2(\beta
a)$, which grows only logarithmically. The exponent $\nu$ is the exchange
rate between the two.

\subsection{Decoder-Stable Widths and a Metric-Volume Lower Bound}
\label{subsec:stable_widths_lipschitz}

A consequence of \eqref{eq:intro-entropy} is that the two descriptors respond to electrical size in opposite ways: for a fixed regular control topology, the intrinsic dimension $d$ counts the locally independent parameter directions and remains constant as electrical size grows,\footnote{This constancy concerns the \emph{local} rank of $D\Phi$, not global identifiability: the parameter-to-field map need not be globally injective even when $d$ stays fixed, since symmetries of $\Phi$ can make distinct parameter values yield nearly indistinguishable fields. Section~\ref{subsec:diffusion_maps} illustrates this with the elliptic cylinder, whose exact mirror congruence manifests, at low frequency, as a near-coincidence between illumination angles related by that symmetry.} in sharp contrast with the unboundedly growing NDF. The high-frequency complexity does not disappear, however; it migrates into the \emph{volume}.

We formalize this within the abstract continuous non-linear approximation framework of DeVore, Howard, and Micchelli \cite{devore1989optimal}: for a latent dimension $L\in\mathbb N$, any non-linear compression scheme maps the field configurations into a low-dimensional Euclidean space $\mathbb{R}^L$ via a continuous encoder $\Psi_{\mathrm{enc}} \in \mathcal{C}(\mathcal{E}, \mathbb{R}^L)$, while a continuous decoder $\Phi_{\mathrm{dec}} \in \mathcal{C}(\mathbb{R}^L, \mathcal{H}_E)$ reconstructs each field from its coordinate vector.

Continuity of the decoder alone does not exclude pathological
space-filling images: a one-dimensional latent trajectory may have
a higher-dimensional image (e.g., a Peano-like curve)\footnote{A space-filling curve is a continuous surjection from a
	lower to a higher-dimensional set. Such a map is not Lipschitz and
	does not by itself provide an admissible stable encoder--decoder pair.}, but only at
the price of losing Lipschitz regularity  \cite{falconer2014fractal}. This shows why
decoder continuity alone is insufficient as a stability requirement.

Restricting the decoder to be globally Lipschitz avoids this
pathology. Specifically, there must exist a finite constant
$\gamma>0$ such that
\begin{equation}
	\begin{aligned}
		\left\|
		\Phi_{\mathrm{dec}}(y_1)
		-
		\Phi_{\mathrm{dec}}(y_2)
		\right\|_{\mathcal H_E}
		&\le
		\gamma
		\left\|
		y_1-y_2
		\right\|_{\mathbb R^L},
		\\
		&\hspace{-2.2cm}
		\forall\, y_1,y_2\in\mathbb R^L.
	\end{aligned}
	\label{eq:decoder-lipschitz}
\end{equation}
We denote by $\operatorname{Lip}(\Phi_{\mathrm{dec}})$ the smallest
admissible value of $\gamma$, i.e., the global Lipschitz constant of
the decoder. However, this constant alone is not a scale-invariant
measure of stability. Indeed, rescaling the latent coordinates by any
$R>0$, through the replacements
$\Psi_{\mathrm{enc}}\mapsto R\Psi_{\mathrm{enc}}$ and
$\Phi_{\mathrm{dec}}(\cdot)\mapsto
\Phi_{\mathrm{dec}}(\cdot/R)$, leaves the reconstructed field
unchanged while dividing the decoder Lipschitz constant by $R$.
Therefore, the Lipschitz constant of the decoder can be made
arbitrarily small through a mere change of latent-coordinate scale.

We instead introduce the normalized decoder sensitivity
\begin{equation}
	\Gamma_{\mathrm{dec}}
	(\Psi_{\mathrm{enc}},\Phi_{\mathrm{dec}})
	\coloneqq
	\operatorname{Lip}(\Phi_{\mathrm{dec}})
	\operatorname{diam}
	\bigl(\Psi_{\mathrm{enc}}(\mathcal E)\bigr),
	\label{eq:normalized-sensitivity}
\end{equation}
where
$\operatorname{diam}\bigl(\Psi_{\mathrm{enc}}(\mathcal E)\bigr)$
denotes the diameter of the latent representation with respect to the
Euclidean norm of $\mathbb R^L$. The quantity
$\Gamma_{\mathrm{dec}}$ is invariant under the above rescaling and
characterizes the stability of the latent-to-field reconstruction. It
is not intended to represent the complete end-to-end sensitivity of
the autoencoder. If the encoder is also Lipschitz, the end-to-end
sensitivity is bounded according to
\[
\operatorname{Lip}
\bigl(\Phi_{\mathrm{dec}}\circ\Psi_{\mathrm{enc}}\bigr)
\le
\operatorname{Lip}(\Phi_{\mathrm{dec}})
\operatorname{Lip}(\Psi_{\mathrm{enc}}).
\]

Motivated by the stable manifold widths of Cohen, DeVore, Petrova,
and Wojtaszczyk \cite{cohen2022optimal}, which require both the encoder
and the decoder to be Lipschitz with respect to an explicitly fixed
latent norm, we introduce a \emph{decoder-stable width}. In this
definition, the encoder is required to be only continuous, whereas
latent-to-field stability is controlled through
$\Gamma_{\mathrm{dec}}$ under a prescribed budget
$\bar{\Gamma}$:
\begin{equation}
	\delta_{L,\bar\Gamma}^{\mathrm{dec}}
	(\mathcal E)_{\mathcal H_E}
	\coloneqq
	\inf_{\substack{
			\Psi_{\mathrm{enc}}
			\in\mathcal C(\mathcal E,\mathbb R^L)
			\\
			\Phi_{\mathrm{dec}}
			\in\mathrm{Lip}(\mathbb R^L,\mathcal H_E)
			\\
			\Gamma_{\mathrm{dec}}
			(\Psi_{\mathrm{enc}},\Phi_{\mathrm{dec}})
			\le\bar\Gamma
	}}
	\;
	\sup_{f\in\mathcal E}
	\left\|
	f-
	\Phi_{\mathrm{dec}}
	\bigl(\Psi_{\mathrm{enc}}(f)\bigr)
	\right\|_{\mathcal H_E},
	\label{eq:decoder-stable-width}
\end{equation}
where $\mathrm{Lip}(\mathbb R^L,\mathcal H_E)$ denotes the class of
globally Lipschitz maps from $\mathbb R^L$ to $\mathcal H_E$ with a
finite, but otherwise unrestricted, Lipschitz constant.

Geometrically, $\Gamma_{\mathrm{dec}}$ bounds how much the decoder may \emph{dilate} distances relative to the scale of the latent representation itself while leaving it free to \emph{contract} distances.

By linking $\delta_{L,\bar\Gamma}^{\mathrm{dec}}$ to the metric
entropy via the packing--covering relationship
\eqref{eq:sandwich}, we obtain the following necessary lower bound on decoder sensitivity (Fig.~\ref{fig:stretching-schematic}).

\begin{figure}
	\centering
	\includegraphics[width=1.0\linewidth]{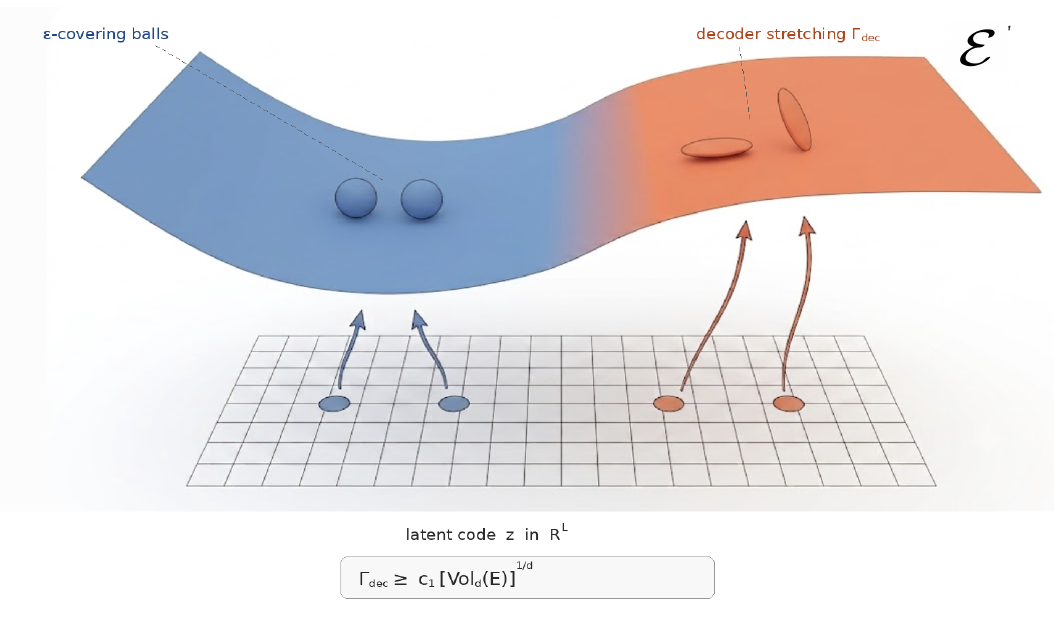}
	\caption{Schematic intuition for the Lipschitz conservation law (see also Appendix \ref{app:minimum_constant}).
		The field manifold $\mathcal E$ (top ribbon) is decoded from a
		latent code $\mathbf z\in\mathbb R^L$ (bottom grid) through the
		decoder map $\Phi_{\mathrm{dec}}$ (arrows). By the normalization
		underlying $\Gamma_{\mathrm{dec}}$, latent balls are all the same
		size regardless of where they sit on the grid. Where the manifold
		has low curvature (blue), the decoder maps them to comparably
		compact $\varepsilon$-covering balls; where the manifold folds
		tightly (orange), the same-size latent balls are forced into
		irregular, elongated images -- the local decoder stretching that
		the bound below makes precise.}
	\label{fig:stretching-schematic}
\end{figure}

\begin{proposition}[Metric-volume lower bound for stable decoding]
	\label{prop:lipschitz}
	Let $\mathcal{E}\subset\mathcal H_E$ be a smooth, compact $d$-dimensional field manifold, possibly with boundary, with Riemannian volume $\mathrm{Vol}_d(\mathcal{E})$.\footnote{The pullback integral counts the multiplicity of the
	parametrization, whereas $\mathrm{Vol}_d(\mathcal E)$ counts each field
	configuration once. For a non-injective $\Phi$, the integral must
	therefore be restricted to an injective domain or corrected for
	multiplicity.}
		For all sufficiently small $\eps>0$, if an admissible encoder--decoder pair $(\Psi_{\mathrm{enc}},\Phi_{\mathrm{dec}})$ as in \eqref{eq:decoder-stable-width} reconstructs every field of $\mathcal{E}$ to within accuracy $\eps$ (in the $\mathcal{H}_E$ norm) over a coordinate space matching the true intrinsic dimensionality ($L = d$), then its normalized sensitivity satisfies
	\begin{equation}
		\Gamma_{\mathrm{dec}}(\Psi_{\mathrm{enc}},\Phi_{\mathrm{dec}}) \ge c_1 \left[ \mathrm{Vol}_{d}(\mathcal{E}) \right]^{1/d},
		\label{eq:migration}
	\end{equation}
	with $c_1 > 0$ independent of $\eps$, and independent of electrical size under the uniform small-$\eps$ entropy estimate stated in Appendix~\ref{app:minimum_constant}.
\end{proposition}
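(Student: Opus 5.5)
The plan is a counting argument: compare the number of $\eps$-balls needed to cover $\mathcal E$ with the number of latent cells that a Lipschitz decoder can map onto them. Write $D\coloneqq\operatorname{diam}(\Psi_{\mathrm{enc}}(\mathcal E))$ and $\gamma\coloneqq\operatorname{Lip}(\Phi_{\mathrm{dec}})$, so that $\Gamma_{\mathrm{dec}}=\gamma D$. With $L=d$, the latent image $\Psi_{\mathrm{enc}}(\mathcal E)$ is a set of diameter $D$ in $\mathbb R^d$, so it lies in a closed Euclidean ball of radius $D$. That ball can be covered by at most $(1+2D/r)^d$ balls of radius $r$, by the standard volumetric packing argument in $\mathbb R^d$. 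We choose $r=\eps/\gamma$. The decoder then maps each latent $r$-ball into an $\mathcal H_E$-ball of radius $\eps$, by \eqref{eq:decoder-lipschitz}.

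Next we use the reconstruction accuracy. For every $f\in\mathcal E$ we have $\|f-\Phi_{\mathrm{dec}}(\Psi_{\mathrm{enc}}(f))\|\le\eps$. The point $\Psi_{\mathrm{enc}}(f)$ lies in one of the latent $r$-balls, say with centre $y_i$, so by the triangle inequality $\|f-\Phi_{\mathrm{dec}}(y_i)\|<2\eps$. The points $\Phi_{\mathrm{dec}}(y_i)$ therefore form a $2\eps$-cover of $\mathcal E$, with a harmless adjustment of open versus closed radii. This gives
\[
N_{2\eps}(\mathcal E)\le\bigl(1+2\gamma D/\eps\bigr)^d=\bigl(1+2\Gamma_{\mathrm{dec}}/\eps\bigr)^d .
\]
The scale invariance of $\Gamma_{\mathrm{dec}}$ is automatic here, because only the product $\gamma D$ enters.

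We then insert the lower bound on metric entropy from \eqref{eq:intro-entropy}, made non-asymptotic as in the uniform small-$\eps$ estimate of Appendix~\ref{app:minimum_constant}. For all sufficiently small $\eps$ it gives
\[
N_{2\eps}(\mathcal E)\ge c_0\,\mathrm{Vol}_d(\mathcal E)\big/\bigl(V_d(2\eps)^d\bigr),
\]
with $c_0>0$ absorbing curvature and boundary corrections. Combining the two inequalities and taking $d$-th roots yields
\[
1+2\Gamma_{\mathrm{dec}}/\eps\ \ge\ \frac{c_0^{1/d}}{2V_d^{1/d}}\,\frac{\mathrm{Vol}_d(\mathcal E)^{1/d}}{\eps}.
\]
This is equivalent to $\Gamma_{\mathrm{dec}}\ge c\,\mathrm{Vol}_d(\mathcal E)^{1/d}-\eps/2$ for an explicit constant $c$. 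For $\eps$ small compared with $\mathrm{Vol}_d(\mathcal E)^{1/d}$, the $-\eps/2$ term is absorbed by halving $c$, which gives \eqref{eq:migration} with $c_1$ independent of $\eps$. The footnote's convention, that the volume counts each field once, is exactly what a covering number measures, so no injectivity of $\Phi$ is needed in this step.

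The main obstacle is the uniformity of the entropy lower bound. The asymptotic statement $N_\eps\sim\mathrm{Vol}_d/(V_d\eps^d)$ only holds for $\eps$ below a threshold that depends on the manifold's reach and curvature. As $\beta a$ grows the manifold folds more tightly, so this threshold may shrink with electrical size. The proof should therefore state this as a hypothesis: there are $\eps_0(\beta a)$ and a uniform $c_0$ such that the lower covering bound holds for $\eps<\eps_0$. That is precisely the uniform estimate deferred to Appendix~\ref{app:minimum_constant}. To establish it, I would first try a local argument. Cover $\mathcal E$ by geodesic balls below the injectivity/reach scale, on which the ambient distance is comparable to the intrinsic distance within a factor of 2. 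Each ambient $\eps$-ball then meets $\mathcal E$ in a set of intrinsic volume at most about $V_d(2\eps)^d$, and summing gives the required lower bound on $N_\eps(\mathcal E)$.
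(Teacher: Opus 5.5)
Your proposal is correct and follows essentially the same route as the paper's Appendix~A: cover the latent image at radius $\eps/\operatorname{Lip}(\Phi_{\mathrm{dec}})$, push the cover forward through the decoder and use the triangle inequality to get $N_{2\eps}(\mathcal E)\lesssim(1+\Gamma_{\mathrm{dec}}/\eps)^d$, compare this with the uniform small-$\eps$ entropy lower bound so that the $\eps^{-d}$ factors cancel, and absorb the leftover additive $\eps$ term. The differences are cosmetic. You keep the product $\gamma D$ instead of normalizing the latent diameter to one, and you state the entropy hypothesis multiplicatively and sketch a reach-based justification for it. The paper instead simply assumes that hypothesis, with uniform $\eps_0$ and $C_0$, together with a uniform volume floor $V_{\min}$ used to make the absorption of $\eps$ uniform in electrical size.
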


\begin{corollary}[Stability--dimension trade-off]
	\label{cor:stability-tradeoff}
	Under the hypotheses of Proposition~\ref{prop:lipschitz}, but for an arbitrary latent dimension $L$ -- not necessarily equal to $d$ -- and for all sufficiently small $\eps>0$, the normalized sensitivity satisfies
	\begin{equation}
		\Gamma_{\mathrm{dec}}(\Psi_{\mathrm{enc}},\Phi_{\mathrm{dec}}) \;\ge\; c_1(L)\, \eps^{(L-d)/L}\, \big[\mathrm{Vol}_{d}(\mathcal{E})\big]^{1/L} ,
		\label{eq:stability-tradeoff}
	\end{equation}
	with $c_1(L) > 0$ independent of $\eps$ and, under the same uniformity assumption, of electrical size. The bound is \emph{necessary}, not constructive: it lower-bounds the normalized sensitivity of \emph{any} admissible pair achieving accuracy $\eps$ at that $L$, without guaranteeing that a decoder attaining it exists.
\end{corollary}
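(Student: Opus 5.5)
The plan is to repeat the covering argument behind Proposition~\ref{prop:lipschitz}, now with a general latent dimension $L$. The idea is to transfer an $\eps$-cover of the latent representation into a cover of $\mathcal E$, and then compare it with the entropy law \eqref{eq:intro-entropy}.

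First, fix an admissible pair achieving accuracy $\eps$. Set $D\coloneqq\operatorname{diam}(\Psi_{\mathrm{enc}}(\mathcal E))$ and $\gamma\coloneqq\operatorname{Lip}(\Phi_{\mathrm{dec}})$, so that $\Gamma_{\mathrm{dec}}=\gamma D$. The latent set $\Psi_{\mathrm{enc}}(\mathcal E)$ lies in a Euclidean ball of radius $D$ in $\mathbb R^L$ (Jung's theorem gives the sharper radius $D\sqrt{L/(2(L+1))}$, but $D$ suffices). A standard volumetric argument covers it by at most $(1+2D/r)^L$ balls of radius $r$. I choose $r=\eps/\gamma$.

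Next, the decoder maps each latent ball of radius $r$ into an $\mathcal H_E$-set of radius $\gamma r=\eps$. Every $f\in\mathcal E$ lies within $\eps$ of $\Phi_{\mathrm{dec}}(\Psi_{\mathrm{enc}}(f))$. Hence the decoded centers, with radius $2\eps$, cover $\mathcal E$, which gives
\[
N_{2\eps}(\mathcal E)\le\Bigl(1+\tfrac{2\gamma D}{\eps}\Bigr)^{L}=\Bigl(1+\tfrac{2\Gamma_{\mathrm{dec}}}{\eps}\Bigr)^{L}.
\]

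Then I invoke the small-$\eps$ volumetric entropy estimate of Appendix~\ref{app:minimum_constant}, i.e.\ the uniform version of \eqref{eq:intro-entropy}:
\[
N_{2\eps}(\mathcal E)\ge c\,\mathrm{Vol}_d(\mathcal E)\,(2\eps)^{-d}
\]
for all sufficiently small $\eps$, with $c$ independent of $\eps$ and of electrical size. Combining the two bounds gives
\[
1+\frac{2\Gamma_{\mathrm{dec}}}{\eps}\ge \bigl[c\,2^{-d}\,\mathrm{Vol}_d(\mathcal E)\bigr]^{1/L}\eps^{-d/L}.
\]

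Finally, I absorb the additive $1$. If $\Gamma_{\mathrm{dec}}\ge\eps/2$, then $1+2\Gamma_{\mathrm{dec}}/\eps\le 4\Gamma_{\mathrm{dec}}/\eps$, and rearranging yields
\[
\Gamma_{\mathrm{dec}}\ge \tfrac14\,(c2^{-d})^{1/L}\,\eps^{1-d/L}\,\mathrm{Vol}_d(\mathcal E)^{1/L},
\]
which is \eqref{eq:stability-tradeoff} with $c_1(L)=\tfrac14(c2^{-d})^{1/L}$. The complementary case $\Gamma_{\mathrm{dec}}<\eps/2$ must be excluded. In that case the bound gives $N_{2\eps}(\mathcal E)\le 2^L$, which for $L<d$ contradicts the entropy lower bound once $\eps$ is small. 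For $L\ge d$ it contradicts it whenever $\mathrm{Vol}_d(\mathcal E)(2\eps)^{-d}$ exceeds $2^L/c$, which again holds for small $\eps$. Alternatively, the reconstruction bound alone forces $\Gamma_{\mathrm{dec}}\gtrsim\operatorname{diam}\mathcal E-2\eps$, which exceeds $\eps/2$ for small $\eps$. Setting $L=d$ recovers Proposition~\ref{prop:lipschitz}.

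The main obstacle is this last step together with the uniformity in electrical size. The lower entropy estimate must hold with a $\beta a$-independent constant and on an $\eps$-range that does not shrink as the manifold stretches, which is exactly the hypothesis deferred to the Appendix. I would state it explicitly there as an assumption on the injectivity radius and curvature of $\mathcal E_{\beta a}$ relative to $\eps$. I would also remark that for $L>d$ the factor $\eps^{(L-d)/L}$ makes the bound weak, so the corollary is informative mainly for $L\le d$.
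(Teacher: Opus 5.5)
Your proposal is correct and follows essentially the same route as the paper's Appendix~A proof. Both push a latent $(\eps/\gamma)$-cover through the Lipschitz decoder, use the triangle inequality to obtain $N_{2\eps}(\mathcal E)\lesssim(1+\Gamma_{\mathrm{dec}}/\eps)^L$, and compare this with the uniform small-$\eps$ entropy lower bound. The differences are cosmetic. The paper normalizes the latent diameter to $1$ instead of carrying $D$. It also absorbs the additive term by showing that the right-hand side of $\Gamma_{\mathrm{dec}}+\eps\ge\dots$ eventually exceeds $2\eps$, whereas you use a case split on $\Gamma_{\mathrm{dec}}\ge\eps/2$; either way, making the $\eps$-threshold uniform in electrical size needs the paper's assumption $\mathrm{Vol}_d(\mathcal E_s)\ge V_{\min}$.
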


Proposition~\ref{prop:lipschitz} corresponds to the formal case
$L=d$, for which $\eps^{(L-d)/L}=1$; however, topology may require
$L_{\min}>d$ (e.g., $L_{\min}=2$ for a circle $S^1$, $d=1$, and
$L_{\min}=3$ for a torus $T^2$, $d=2$, the control-manifold topologies
realized in Section~\ref{sec:array_case_study}). For $L>d$, the lower bound may decay as $\eps\to0$, so
additional latent coordinates can relax the necessary sensitivity
constraint, although the bound establishes neither monotonicity in
$L$ nor attainability. For $L<d$, the bound diverges as
$\eps\to0$, ruling out stable arbitrarily accurate reconstruction
below the intrinsic dimension.

The proofs are given in Appendix~\ref{app:minimum_constant}. They adapt
classical covering-number arguments for stable non-linear widths
\cite{cohen2022optimal,devore1989optimal} to the entropy law
\eqref{eq:intro-entropy}, yielding, to our knowledge, a new explicit
dependence of $\Gamma_{\mathrm{dec}}$ on the metric volume of
parametric electromagnetic field manifolds. For $L=d$, the bound is
independent of $\varepsilon$ in the small-$\varepsilon$ regime,
whereas for general $L$ it retains the dependence in
\eqref{eq:stability-tradeoff}. If $\nu>0$,
$\mathrm{Vol}_d(\mathcal E_{\beta a})$ grows with electrical size;
therefore, at fixed accuracy and normalized-sensitivity budget, the
operational latent dimension may exceed $L_{\min}$.

\subsection{Physics-Constrained Autoencoders as Stable Non-Linear Interpolators}
\label{subsec:data_driven_ae}

The classical linear approximation framework finds a natural non-linear
generalization in deep \emph{autoencoders} \cite{hinton2006reducing}, which map a
high-dimensional dataset onto a low-dimensional latent coordinate
space. Geometrically, this architecture does not merely project the data onto a flat subspace; when a suitable global latent representation exists, it learns a non-linear parameterization of the field manifold. Consistency requires a continuous encoder and a stable decoder, with decoder sensitivity quantified by $\Gamma_{\mathrm{dec}}$ as introduced in Sec.~\ref{subsec:stable_widths_lipschitz}.

For a smooth, compact field manifold $\mathcal E$, we define the \emph{Electromagnetic Intrinsic Dimension} (EID) as its intrinsic manifold dimension,
\begin{equation}
	d \coloneqq \lim_{\eps\to0} \frac{H_{\eps}(\mathcal E)}{\log_2(1/\eps)},
	\label{eq:eid-definition}
\end{equation}
where the equality follows directly from the small-$\varepsilon$
(fine-resolution) metric-entropy law \eqref{eq:intro-entropy} of Sec.~\ref{subsec:metric_entropy_definition}: since $H_\eps(\mathcal E)=d\log_2(1/\eps)+\log_2\mathrm{Vol}_d(\mathcal E)+\mathcal O(1)$, the $\mathcal O(1)$ term becomes negligible relative to the diverging $\log_2(1/\eps)$ as $\eps\to0$. 

Equivalently, at each regular point $\theta\in\Theta$ of the
parameter-to-field map, the intrinsic dimension is
\[
d=\operatorname{rank}_{\mathbb R}D\Phi(\theta),
\]
where the rank is taken over $\mathbb R$ because the control parameters
are real-valued. Because $\mathcal E$ is contained in the significant
linear field space at the prescribed resolution, its intrinsic
dimension cannot exceed the effective linear dimension $\NDFe$ and,
in strongly parameter-constrained systems, is typically much smaller.
For a fixed regular control topology, $d$ remains constant as
electrical size varies, whereas $\NDFe$ may grow as additional spatial
modes of the unrestricted radiation operator become significant. The
resulting high-frequency complexity is therefore expressed through
metric-volume growth and, operationally, through the normalized
decoder sensitivity $\Gamma_{\mathrm{dec}}$, rather than through an
increase of $d$.

For a prescribed reconstruction accuracy $\eps$ and normalized decoder-sensitivity budget $\bar\Gamma$, we separately define the \emph{operational stable latent dimension}
\begin{equation}
	L_{\mathrm{stab}}(\eps,\bar\Gamma) \coloneqq \min\left\{ L\in\mathbb N : \delta_{L,\bar\Gamma}^{\mathrm{dec}}(\mathcal E)_{\mathcal H_E} \le\eps \right\}.
	\label{eq:stable-latent-dimension}
\end{equation}
Unlike $d$, the operational latent dimension depends on the prescribed
accuracy and stability budget and must satisfy the topological
constraint $L\ge L_{\min}$, where $L_{\min}$ is the minimum Euclidean
embedding dimension of $\mathcal E$.

In this context, the proposed framework implements these continuous
mappings using a deep autoencoder trained on a parametrically generated
field ensemble. Its latent dimension defines the Autoencoder Latent
Dimension (ALD), denoted $\LAE$. 

Let $N_{\mathrm{obs}}\in\mathbb N$ denote the number of complex field
observations and $N_{\mathrm{SVD}}\in\mathbb N$ the number of retained
dominant singular modes. The architecture comprises an encoder
$\Psi_{\mathrm{enc},\boldsymbol{\eta}}
:\mathbb R^{2N_{\mathrm{obs}}}\to\mathbb R^{\LAE}$
and a decoder
$\Phi_{\mathrm{dec},\boldsymbol{\xi}}
:\mathbb R^{\LAE}\to\mathbb R^{N_{\mathrm{SVD}}}$,
where $\boldsymbol{\eta}$ and $\boldsymbol{\xi}$ denote their respective
trainable parameters.
All complex field samples are represented in real-stacked form,
$\mathbf x=[\mathrm{Re},\mathbf E^{\mathsf T},
\mathrm{Im},\mathbf E^{\mathsf T}]^{\mathsf T}
\in\mathbb R^{2N_{\mathrm{obs}}}$.
Rather than relying on a purely data-dependent basis, we directly embed
a fixed, physics-derived SVD basis of the corresponding real-stacked
radiation operator. The field vector is reconstructed via the linear
combination layer
\begin{equation}
	\mathbf x_{\mathrm{pred}}
	=
	\sum_{i=1}^{N_{\mathrm{SVD}}}
	\alpha_i\mathbf u_i
	=
	\mathbf U_r,
	\Phi_{\mathrm{dec},\boldsymbol\xi}(\mathbf z),
	\label{eq:svd_rec}
\end{equation}
where $\mathbf u_i$ is the $i$-th retained left singular vector and
$\mathbf U_r=[\mathbf u_1,\dots,\mathbf u_{N_{\mathrm{SVD}}}]
\in\mathbb R^{2N_{\mathrm{obs}}\times N_{\mathrm{SVD}}}$; hence
$\mathbf u_i$ and the decoder coefficients $\alpha_i$ are real.\footnote{Individual singular vectors are unique only up to sign or,
	within degenerate singular subspaces, up to an arbitrary rotation; the
	physically meaningful, basis-independent object is the dominant
	singular \emph{subspace}, not the particular $\mathbf u_i$ chosen to
	span it. In the numerical implementation, $N_{\mathrm{SVD}}$ is
	selected so that the SVD truncation error is below a tolerance
	$\eps_{\mathrm{SVD}}<\eps$ on both the training and validation sets;
	the SVD truncation is therefore empirically negligible over the
	sampled field ensemble, though this finite-sample test does not by
	itself constitute a uniform bound over all of $\mathcal E$. The
	non-linear latent bottleneck then compresses further, down to
	$\LAE\ll N_{\mathrm{SVD}}$.}

Since $\Phi_{\mathrm{dec},\boldsymbol\xi}$ maps to the coefficient space $\mathbb R^{N_{\mathrm{SVD}}}$ rather than directly to $\mathcal H_E$, the decoder entering the definition of $\Gamma_{\mathrm{dec}}$ (Sec.~\ref{subsec:stable_widths_lipschitz}) is the composite field decoder $\widetilde\Phi_{\mathrm{dec},\boldsymbol\xi}(z)\coloneqq\mathbf U_r\Phi_{\mathrm{dec},\boldsymbol\xi}(z)\in\mathbb R^{2N_{\mathrm{obs}}}$; since the retained left singular vectors are orthonormal, $\|\mathbf U_r\|_2=1$, so $\mathrm{Lip}(\widetilde\Phi_{\mathrm{dec},\boldsymbol\xi})\le\mathrm{Lip}(\Phi_{\mathrm{dec},\boldsymbol\xi})$, and $\Gamma_{\mathrm{dec}}$ throughout this section denotes $\Gamma_{\mathrm{dec}}(\Psi_{\mathrm{enc},\boldsymbol\eta},\widetilde\Phi_{\mathrm{dec},\boldsymbol\xi})$. This architecture constructs a parametric, curved trial manifold $\hat{\mathcal{M}} \subset \mathbb{R}^{2N_{\mathrm{obs}}}$ anchored to the underlying propagation physics \cite{hanson2013operator}. 
With finite network weights and globally Lipschitz activations, such as
GELU \cite{hendrycks2016gelu}, both maps are globally Lipschitz.
A conservative upper bound follows from layerwise spectral norms and activation Lipschitz constants. Section~\ref{sec:stability_validation} instead estimates on-manifold sensitivity from sampled ratios of output to latent-space distances and local Jacobian norms, providing a practical proxy for $\Gamma_{\mathrm{dec}}$.

It is useful to clarify the distinction among the different dimensional
quantities introduced above. A trained autoencoder with latent dimension
$\LAE$ provides a constructive upper bound on the stable dimension. If
its reconstruction error is at most $\eps$ over $\mathcal E$ and its
normalized decoder sensitivity satisfies
$\Gamma_{\mathrm{dec}}\le\bar\Gamma$, then
\begin{equation}
	L_{\mathrm{stab}}(\eps,\bar\Gamma)\le\LAE .
\end{equation}
Thus, a stable representation exists in $\LAE$ latent coordinates,
although $\LAE$ need not be minimal.

In the numerical experiments, both the reconstruction error and the
decoder sensitivity are evaluated only on finite datasets. Therefore,
the inequality is supported empirically rather than uniformly over
$\mathcal E$. Failure at $\LAE-1$ does not prove impossibility, since it
may result from limited network capacity, unsuccessful optimization, or
insufficient sampling.

The topological embedding dimension $L_{\min}$ addresses a different
question. Establishing $L_{\min}\le\LAE$ requires the encoder to be
injective on $\mathcal E$, so that distinct fields have distinct latent
representations. Exact reconstruction guarantees injectivity, whereas a
small nonzero reconstruction error does not.

Hence, $d$, $L_{\min}$, $L_{\mathrm{stab}}(\eps,\bar\Gamma)$, and
$\LAE$ denote, respectively, the intrinsic dimension, the minimum
embedding dimension, the minimum dimension compatible with the prescribed
accuracy and stability constraints, and the latent dimension realized by
the trained autoencoder.

\begin{figure}
	\centering
	\includegraphics[width=1.0\linewidth]{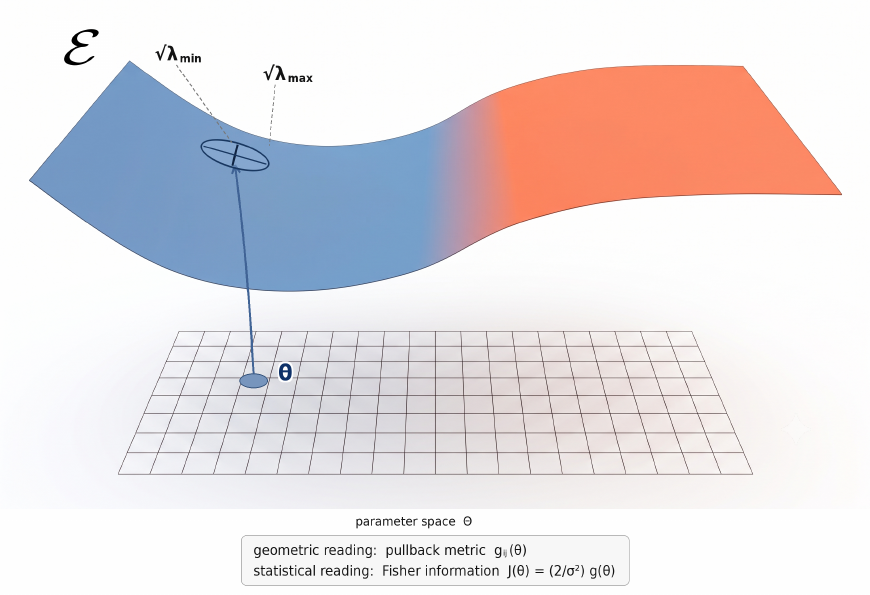}
\caption{Geometric and statistical readings of the same pullback
	metric. A small parameter ball around $\theta$ (bottom grid, parameter
	space $\Theta$) is carried by the differential $D\Phi(\theta)$ onto a
	tangent ellipse on the field manifold $\mathcal E$ (top), with
	semi-axes $\sqrt{\lambda_{\min}(\theta)}$ and
	$\sqrt{\lambda_{\max}(\theta)}$ set by the eigenvalues of the
	pullback metric $g_{ij}(\theta)$, Eq.~\eqref{eq:pullback-explicit}.
	Under the additive Gaussian noise model of this section, the same
	eigenvalues set the Fisher information,
	$J(\theta)=\tfrac{2}{\sigma^2}g(\theta)$, Eq.~\eqref{eq:fim-equals-metric}:
	a long semi-axis (large $\lambda$) marks a direction of high field
	sensitivity and hence low Cram\'er--Rao uncertainty, while a short
	semi-axis (small $\lambda$) marks one close to non-identifiability,
	with diverging Cram\'er--Rao uncertainty, as
	Sec.~\ref{sec:NonOrthogonal} illustrates when two steering
	directions merge.}
	\label{fig:pullback-fisher-ellipse}
\end{figure}

\section{The Pullback Metric and Its Fisher-Information Interpretation}
\label{sec:fisher_bridge}

This section introduces a stochastic noise model in place of the
deterministic resolution criterion based on $\varepsilon$-balls used
so far, paralleling the approach adopted in previous EIT studies
\cite{migliore2018horse}. In the
present setting, this extension is both theoretically significant and
practically relevant, as it reveals a direct connection between the
parametric EIT framework developed in this paper and Information
Geometry, suggesting a natural way to extend this approach to
Electromagnetic Signal and Information Theory (ESIT)
\cite{DiRMig:24}. Owing to space limitations, only the main elements
of this connection are introduced here; a more comprehensive treatment
is left for future work. The reader is referred to
\cite{amari2016information} for a foundational discussion.

Let $\Theta\subseteq\mathbb R^d$, let $\theta\in\Theta$, and let
$\Phi:\Theta\to\mathcal H_E$ be continuously differentiable. The
pullback metric induced by the observation-space inner product is
\begin{equation}
	\begin{aligned}
		g_{ij}(\theta)
		&\coloneqq
		\operatorname{Re}\left\langle
		\partial_{\theta_i}\Phi(\theta),
		\partial_{\theta_j}\Phi(\theta)
		\right\rangle_{\mathcal H_E},\\
		\mathrm d\mathrm{Vol}_d
		&=
		\sqrt{\det\!\bigl(\mathbf g(\theta)\bigr)}\,
		\mathrm d\theta ,
	\end{aligned}
	\label{eq:pullback-explicit}
\end{equation}
where
$\mathrm d\theta
=
\mathrm d\theta_1\cdots\mathrm d\theta_d$,
$\partial_{\theta_k}\Phi(\theta)$ denotes the derivative of the
field with respect to $\theta_k$, i.e., the corresponding tangent
vector at $\Phi(\theta)$, for $k=1,\ldots,d$, and
$\mathbf g(\theta)=[g_{ij}(\theta)]$ is the matrix representation of
the pullback Riemannian metric in the coordinates $\theta$.
The matrix $\mathbf g(\theta)$ is positive semidefinite everywhere and
positive definite at regular points of $\Phi$. Since
$\mathbf g(\theta)$ is the Gram matrix of the tangent vectors with
respect to the real inner product
$\operatorname{Re}\langle\cdot,\cdot\rangle_{\mathcal H_E}$,
\[
\operatorname{rank}\mathbf g(\theta)
=
\operatorname{rank}_{\mathbb R}D\Phi(\theta).
\]

From a physical and engineering perspective, the pullback metric simply
measures how rapidly the radiated or scattered field
(e.g., the radiation or scattering pattern) varies in the observation
space when a control parameter, such as an element's phase shift or an
illumination angle, is infinitesimally perturbed. This relates the
abstract manifold geometry to familiar operational quantities such as
scanning sensitivity and beam-steering gradients.

Consider now the finite-dimensional observation model, with
$\mathcal H_E=\mathbb C^{N_{\mathrm{obs}}}$,
\[
\mathbf y=\Phi(\theta)+\mathbf n,
\]
where
\[
\mathbf n
\sim
\mathcal{CN}
\left(
\mathbf 0,
\sigma^2\mathbf I_{N_{\mathrm{obs}}}
\right),
\qquad
\mathbb E[\mathbf n\mathbf n^{H}]
=
\sigma^2\mathbf I_{N_{\mathrm{obs}}}.
\]
Let $p_\theta$ denote the probability density of $\mathbf y$
conditioned on the parameter value $\theta$, and let
$\Delta\theta\in\mathbb R^d$ be such that
$\theta+\Delta\theta\in\Theta$. The exact Kullback--Leibler divergence
\cite{cover2006elements,amari2016information} is
\[
D_{\mathrm{KL}}
\bigl(p_\theta\|p_{\theta+\Delta\theta}\bigr)
=
\sigma^{-2}
\left\|
\Phi(\theta)-\Phi(\theta+\Delta\theta)
\right\|_{\mathcal H_E}^{2}.
\]
This admits a direct geometric interpretation: under AWGN, the
statistical distinguishability of two parameter values is determined
by the squared observation-space distance between the corresponding
points on the deterministic field manifold, normalized by the noise
variance. Let $\mathbf J(\theta)$ denote the Fisher information matrix (FIM)
for the real parameter vector $\theta$. Comparing the local quadratic
expansion of the Kullback--Leibler divergence with the standard
Fisher-information expansion
\cite{kay1993fundamentals,vantrees2002optimum} gives
\begin{equation}
	\bigl[\mathbf J(\theta)\bigr]_{ij}
	=
	\frac{2}{\sigma^2}g_{ij}(\theta),
	\label{eq:fim-equals-metric}
\end{equation}
so the pullback metric is proportional to the Fisher Information Matrix
$\mathbf J(\theta)$ under additive Gaussian noise with
parameter-independent covariance \cite{amari2016information}.
Equivalently, up to the constant factor $2/\sigma^2$,
$\mathbf g(\theta)$ is the Fisher--Rao metric of the corresponding
Gaussian statistical model, providing the direct link with Information
Geometry.

For a one-parameter family with
$\sqrt{g_{\theta\theta}}\propto s^\nu$, the metric scales as
$g_{\theta\theta}\propto s^{2\nu}$ and, for a locally unbiased
estimator $\hat\theta$, the local Cram\'er--Rao bound scales as
\[
\operatorname{var}(\hat\theta)
\gtrsim
\sigma^2 s^{-2\nu}.
\]

For a multi-parameter family, let
$\lambda_k(\theta)$, $k=1,\ldots,d$, denote the eigenvalues of
$\mathbf g(\theta)$. If, uniformly over a fixed parameter domain,
\[
\lambda_k(\theta)
\propto
s^{2\nu_k},
\]
then
\[
\nu_{\mathrm{vol}}
=
\sum_{k=1}^{d}\nu_k,
\qquad
\sqrt{\det\!\bigl(\mathbf g(\theta)\bigr)}
\propto
s^{\nu_{\mathrm{vol}}}.
\]
For fixed noise variance $\sigma^2$, it follows that
\[
\det\!\bigl(\mathbf J(\theta)\bigr)
\propto
s^{2\nu_{\mathrm{vol}}},
\]
whereas the Cram\'er--Rao bound along the $k$-th local principal
direction scales as $s^{-2\nu_k}$. Thus, the individual exponents
govern directional estimation accuracy, whereas their sum governs the
joint information-volume growth. These general relations recover
familiar aperture-scaling laws in the array benchmarks. As derived explicitly in Sec.~\ref{sec:communication_theory},
the dual-polarized array provides a concrete illustration of these
relations.

When an eigenvalue of $\mathbf g(\theta)$ approaches zero, the
corresponding parameter direction becomes increasingly
ill-conditioned; at exact degeneracy, the differential rank of
$\Phi$ drops and the FIM becomes singular along that direction.
In the coupled-steering example of
Sec.~\ref{sec:communication_theory}, this occurs as the two steering
directions merge, causing the Cram\'er--Rao bound to diverge along the
collapsing parameter direction.

In the fixed physical parametrization used here, the relative spectral
participation is summarized by
\begin{equation}
	d_{\mathrm{eff}}(\theta)
	=
	\frac{
		\left(
		\sum_{i=1}^{d}\lambda_i(\theta)
		\right)^2
	}{
		\sum_{i=1}^{d}\lambda_i^2(\theta)
	},
	\label{eq:deff}
\end{equation}
which equals $d$ when the eigenvalues are equal and positive, and
decreases continuously as some directions collapse relative to the
others. Unlike the Riemannian volume, $d_{\mathrm{eff}}$ depends on
the chosen physical parametrization. Thus, decoder stability and
statistical estimation accuracy remain distinct operational notions,
connected through the same pullback geometry.

\section{First Case Study (Fixed Control Topology): The Parametric Array}
\label{sec:array_case_study}

The first case study regards a linear array of five $\lambda/2$ dipoles at $\lambda/4$ spacing, the simplest parametric radiator on which to illustrate the concepts of Section~\ref{sec:entropy_capacity}.

The array alternates active elements with fixed excitation (\texttt{A}) and parasitic elements (\texttt{P}) terminated in a lossless reactive load, whose reflection coefficient $\Gamma_{\text{refl}}(\phi)=e^{\mathrm{j}\phi}$, $\phi\in[0,2\pi)$, is the control parameter (spatial sequence \texttt{A\_P\_A\_P\_A}). The two loads are operated either \emph{tied} to a single phase ($\phi_1=\phi_2=\phi$, $d=1$) or \emph{independently} ($d=2$), so that the control manifold is, by construction, a circle $S^1$ or a torus $T^2$, respectively. Note that in this example the topology can be obtained directly by counting the independent phases, which allows us to treat $d$ as a known a priori parameter.

An electronically fed (\texttt{A\_F\_A\_F\_A}) realization, with the loads replaced by directly driven elements of unit-modulus excitation $w(\phi)=e^{\mathrm{j}\phi}$, is included alongside the parasitic one, together with an amplitude-phase control (\texttt{AF}) case within an \texttt{A\_A\_AF\_A\_A} topology, where the central element is fully modulated. Both the independent parasitic configuration and the \texttt{AF} scheme operate with two independent control parameters ($d=2$), but through physically distinct mechanisms --  two separately phased loads versus one jointly amplitude-and-phase-modulated load. Indeed, these two $d=2$ mechanisms differ even topologically: the independently phased loads trace a torus $T^2$ ($\Lmin=3$), whereas the amplitude-phase load spans an annulus ($\Lmin=2$). As shown below, $d$ and $\Lmin$ are determined by the topology of the control manifold, whereas the empirically observed $\LAE$ also depends on the prescribed reconstruction threshold and on the metric and numerical properties of the learned representation.\footnote{Because mutual coupling makes the instantaneous radiated
	power vary from sample to sample, all field snapshots are projected
	onto the unit hypersphere, so that only pattern-shape distinguishability
	enters the metric.}

The autoencoder is trained purely unsupervised on the resulting field ensembles, using the fixed SVD basis \eqref{eq:svd_rec}: the encoder never observes the physical control parameters, being trained solely to reproduce the field from itself. Each latent point $\mathbf z$ is decoded into the expansion coefficients $\boldsymbol\alpha$ of \eqref{eq:svd_rec}, and the field is reconstructed as the corresponding combination of the fixed SVD basis vectors. The fact that the emergent coordinates of $\mathbf z$ empirically exhibit a smooth correspondence with the control parameters is therefore an unsupervised result, not one imposed by construction.

Concretely, each complex radiated-field snapshot is sampled at
$N_{\mathrm{obs}}=180$ observation points. Its real and imaginary
parts are stacked into a real-valued vector
$\mathbf{x}\in\mathbb{R}^{2N_{\mathrm{obs}}}
=\mathbb{R}^{360}$, matching the encoder input layer. 
The radiation-operator SVD is truncated to retain
$N_{\mathrm{SVD}}\leq15$ dominant modes.
The encoder and decoder are
fully connected networks with GELU activations. The encoder maps the
360-dimensional input to $L\in\{1,2,3,4\}$ through two hidden layers
of width 192, whereas the decoder maps
$\mathbf z\in\mathbb{R}^{L}$ through three hidden layers of widths
192, 192, and 96 to the $N_{\mathrm{SVD}}$ output coefficients
$\boldsymbol{\alpha}$.

\begin{figure}[t]
	\centering
	\includegraphics[width=1.0\linewidth]{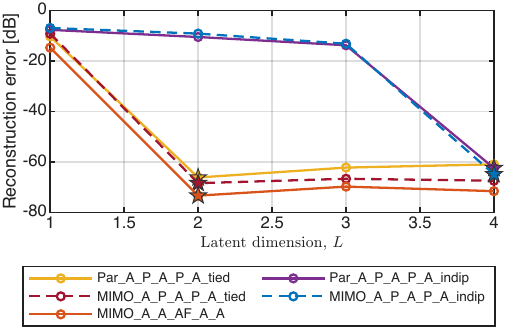}
	\caption{Reconstruction error [dB] of the physics-constrained autoencoder as a function of the latent dimension $L$ across the array topologies.}
	\label{fig:figyya2}
\end{figure}

Figure~\ref{fig:figyya2} shows the reconstruction error as a function of the latent-space dimension.
The results clearly show that, for the tied topology ($d=1$), the reconstruction error decreases abruptly at $\LAE=2$: at $\LAE=1$, the periodic parameter $\phi$ generates a closed field loop topologically equivalent to $S^1$, which admits no continuous injective map into $\mathbb R^1$, producing the topological ``tear'' anticipated in Sec.~\ref{subsec:stable_widths_lipschitz} (i.e., an artificial discontinuity that the encoder is forced to introduce -- cutting the loop open at one point, or making two distinct points of the loop collide onto the same latent coordinate -- since no continuous injective embedding of a closed loop into a line can exist). The topological minimum is therefore $\Lmin=2$, while the trained autoencoder reaches the prescribed reconstruction threshold at $\LAE=2$ and empirically unfolds the data into a clean circle (Fig.~\ref{fig:fig_latent_spaces}, left).

For the independent topology ($d=2$), the control manifold is a torus $T^2$, whose minimal smooth embedding requires $\Lmin=3$; however, under the adopted architecture, training procedure, and reconstruction threshold, the autoencoder reaches errors below $-60$ dB only at $\LAE=4$, indicating an empirical metric gap beyond the topological minimum. Training with $L=3$ produces the visibly self-overlapping surfaces of Fig.~\ref{fig:fig_latent_spaces}, rather than a clean torus. The rapid phase wrapping of the field reduces the local separation between distinct points of the manifold, and the trained network empirically uses the additional coordinate to avoid the observed self-intersections. The same empirical thresholds are observed for the parasitic and electronically fed realizations, supporting the conclusion that their common topology and similar metric structure lead to comparable latent-space requirements, despite being realized through passive reactive loads or directly fed active elements.

The capacity interpretation follows directly from
\eqref{eq:shannon-mapping}: configurations with different intrinsic
dimensions belong to different high-SNR pre-log classes, whereas
configurations sharing the same $d$ may still exhibit different rate
offsets through their metric volumes. Thus, within the $d=2$ examples,
the independently phased and amplitude--phase-controlled arrays need
not have the same capacity offset, even though their high-SNR pre-log
factor is identical.

A limiting example further illustrates that a nonzero intrinsic
dimension does not necessarily imply increasing metric complexity.
If the active elements of the tied parasitic array are progressively
placed outside the effective coupling range of the reactive loads, the
relevant mutual impedances vanish and the radiated response becomes
asymptotically insensitive to further aperture growth. The metric
length then approaches a constant, corresponding to the limiting
scaling exponent $\nu=0$.

\begin{figure}[t]
	\centering
	\includegraphics[width=1.0\linewidth]{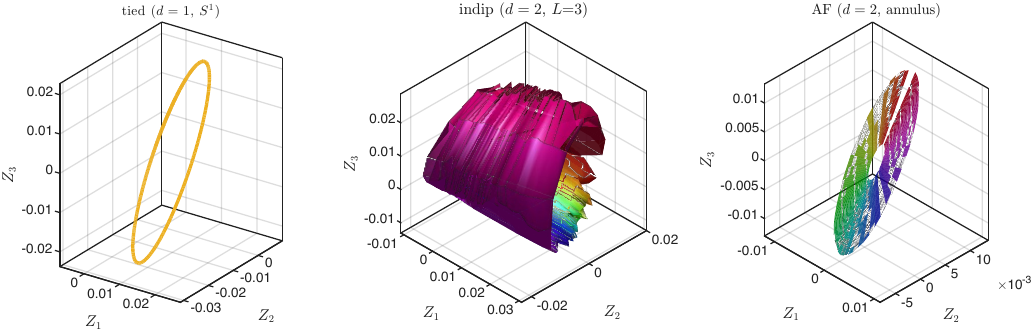}
	\caption{Emergent latent embeddings at $L=3$, highlighting the role of
		topology and metric structure in manifold unfolding. Left: the tied
		$d=1$ case unfolds into a smooth circle ($S^1$), already reaching the
		prescribed threshold at $L_{\mathrm{AE}}=2$. Centre: the independent
		$d=2$ case, whose control manifold is a torus ($L_{\min}=3$), exhibits
		self-overlap at $L=3$ and requires $L_{\mathrm{AE}}=4$ to avoid the
		observed metric self-intersections --- an empirical gap beyond the
		topological minimum. Right: the amplitude-phase (\texttt{AF})
		configuration emerges as a flat disc with a central hole (an annulus),
		showing that different physical modulation mechanisms produce distinct
		learned field-manifold geometries. The electronically fed realizations
		are omitted, as they reproduce the same geometries and thresholds as
		the parasitic ones.}
	\label{fig:fig_latent_spaces}
\end{figure}

\section{Second Case Study: Steering and Polarization Control in a
	Dual-Polarized Array}
\label{sec:communication_theory}

Following the fixed-control-topology example of
Section~\ref{sec:array_case_study}, this second case study shows that
the volumetric exponent $\nu$ depends not only on the intrinsic
dimension $d$, but also on how each parameter modulates the field. We
consider a dual-polarized linear array of $N$ elements with spacing
$s=\lambda/2$ and positions $x_n=(n-(N-1)/2)s$. Defining
$L_{\mathrm{ap}}=Ns$ gives $\beta L_{\mathrm{ap}}=\pi N$.

\subsection{Orthogonal Steering and Polarization Parameters}
\label{subsec:dualpol_orthogonal}

The field is parametrized by the steering angle $\psi$ and the linear
polarization angle $\chi$:
\begin{equation}
	\mathbf{E}(u;\psi,\chi)
	=
	g(u)
	\begin{pmatrix}
		\cos\chi\\
		\sin\chi
	\end{pmatrix}
	\sum_{n=0}^{N-1}
	e^{\,\mathrm{j}\beta x_n(u-\sin\psi)},
	\label{eq:dp-field-main}
\end{equation}
where $u=\sin\theta$ and $g(u)$ is the common scalar element pattern.
The model assumes identical co-polarized patterns and negligible
cross-polarization. Owing to its separable structure, the two parameter
directions are orthogonal and the pullback metric is diagonal. The
choice $g(u)=1$ is used in Appendix~\ref{app:dualpol} only to derive
the closed-form scaling laws.

For fixed $\chi$, differentiation with respect to $\psi$ brings down the
element positions $x_n$ from the phase term, so the metric speed obeys\footnote{Observing this scaling numerically requires: (i) exact far-field evaluation, since finite-range models introduce aperture-dependent path-loss that distorts the metric; (ii) a fixed number of observation samples $N_{\mathrm{obs}}$ across the aperture sweep; and (iii) a single invariant normalization scale rather than a per-configuration normalization, which would otherwise cancel the $N^{1/2}$ array-gain factor and bias the fitted exponent down to $1$.}
$\|\partial_\psi\mathbf E\|^2
\propto
\beta^2\cos^2\psi\sum_n x_n^2
=
\Theta(N^3)$; its square
root sets the arclength scaling (Appendix~\ref{app:dualpol}),
\begin{equation}
	\ell_\psi
	=
	\Theta\big((\beta L_{\mathrm{ap}})^{3/2}\big),
	\qquad
	\nu_\psi=\frac{3}{2}.
	\label{eq:prop-arclength}
\end{equation}
By contrast, varying $\chi$ rotates the global polarization vector
without introducing the spatial factors $x_n$; the tangent norm then
follows the coherent array-factor growth $\|F\|=\Theta(N^{1/2})$, yielding
\begin{equation}
	\ell_\chi
	=
	\Theta\big((\beta L_{\mathrm{ap}})^{1/2}\big),
	\qquad
	\nu_\chi=\frac{1}{2}.
\end{equation}

Because the metric is diagonal,
\begin{equation}
	\mathrm{Vol}_2(\mathcal E)
	=
	\ell_\psi\ell_\chi
	=
	\Theta\big((\beta L_{\mathrm{ap}})^2\big),
\end{equation}
and therefore
\[
\nu_{\mathrm{area}}
=
\nu_\psi+\nu_\chi
=
2.
\]
The numerical slopes reported in
Fig.~\ref{fig:dualpol-area} agree closely with these analytical
predictions. 

\begin{figure}[t]
	\centering
	\includegraphics[width=1.0\linewidth]	{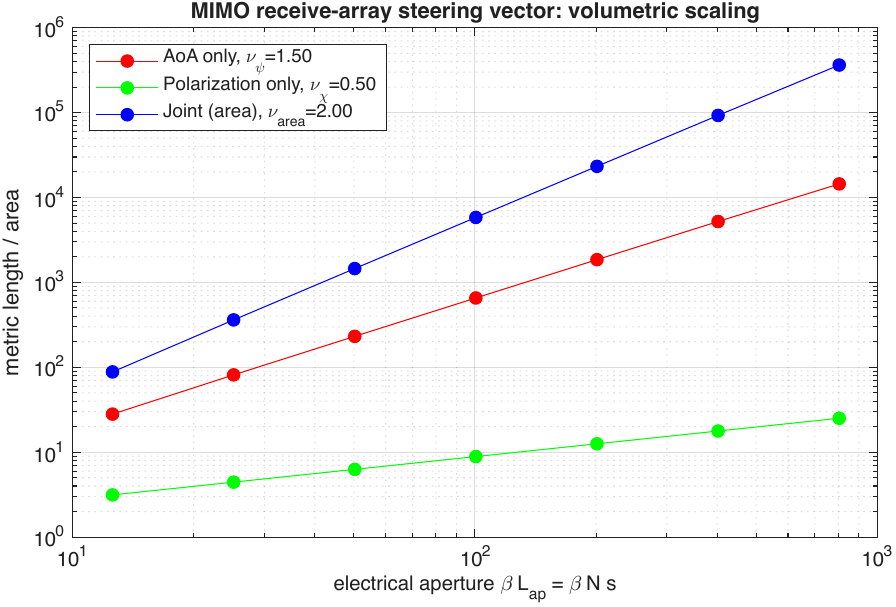}
	\caption{Scaling of the steering arclength $\ell_\psi$, polarization
		arclength $\ell_\chi$, and intrinsic area
		$\mathrm{Vol}_2=\ell_\psi\ell_\chi$. The corresponding exponents are
		approximately $3/2$, $1/2$, and $2$, respectively.}
	\label{fig:dualpol-area}
\end{figure}

The intrinsic dimension remains $d=2$, but the two metric directions
have different scaling exponents. Hence, $\nu$ is not determined by
$d$ alone; it also depends on the physical mechanism through which each
parameter modifies the field---phase modulation for steering
($\psi$ inside the exponential) versus polarization rotation for
polarization ($\chi$ outside it).

\subsection{Non-Orthogonal Parameters and Subspace Resolution Limits}
\label{sec:NonOrthogonal}

The preceding example considered orthogonal parameter directions.
We now consider two steering parameters associated with beams at
$\psi_1=0$ and $\psi_2=\Delta$ on the same aperture. Their tangent
vectors are generally non-orthogonal, producing an off-diagonal
metric term $g_{12}$. Explicitly, let
\begin{equation}
	 E(u;\psi_1,\psi_2)
	=
	\tfrac12\bigl[F(u;\psi_1)+F(u;\psi_2)\bigr],
	\label{eq:two-beam}
\end{equation}
with $F$ the array factor of \eqref{eq:dp-field-main}. Using the
discrete orthogonality of Appendix~B, the pullback metric of
\eqref{eq:pullback-explicit} gives
$g_{kk}=\tfrac14\beta^2\cos^2\psi_k\sum_n x_n^2$ and, defining
$\rho(\Delta)\coloneqq g_{12}/\sqrt{g_{11}g_{22}}$, the cosine of
the angle between the two tangent directions, for $\psi_1=0$,
$\psi_2=\Delta$,
\begin{equation}
	\rho(\Delta)=\frac{\sum_n x_n^2\cos(\beta x_n\sin\Delta)}
	{\sum_n x_n^2},
	\label{eq:rho-explicit}
\end{equation}
i.e.\ the aperture autocorrelation weighted by the second-order
moment of the element positions. 
Hence $\rho$ decorrelates over
$\Delta\sim\lambda/L_{\mathrm{ap}}$: for a uniform large-$N$ aperture,
$\rho\simeq1-3v^2/10$, with
$v=\beta L_{\mathrm{ap}}\sin\Delta/2$. Thus, near $\Delta=0$,
$\lambda_{\min}(\mathbf g)\simeq3g_{11}v^2/10$ and the joint
Cram\'er--Rao bound diverges as $\Delta^{-2}$.
The metric determinant is
\begin{equation}
	\det\mathbf g=g_{11}g_{22}\bigl(1-\rho^2\bigr).
\end{equation}
Near broadside, $g_{11}\simeq g_{22}$, giving
\begin{equation}
	d_{\mathrm{eff}}(\Delta)\simeq\frac{2}{1+\rho(\Delta)^2},
\end{equation}
decreasing from $2$ to $1$ as $\Delta\to0$.
This collapse represents a loss of joint identifiability: the
smallest metric eigenvalue vanishes as the two directions merge.
Since the Fisher Information Matrix is proportional to $\mathbf g$
under the noise model considered here, the Cram\'er--Rao bound
along the collapsing principal direction diverges -- the geometric
counterpart of the classical array resolution limit: as two
steering vectors merge, the smallest singular value of the steering
matrix tends to zero, degrading subspace-based methods such as
MUSIC~\cite{Schmidt1986MUSIC} and ESPRIT~\cite{roy1989esprit}.

\section{Third Case Study (Frequency Scaling and Metric Stretching):
	Scattering from a Dielectric Cylinder}
\label{sec:one_dimensional_manifolds}

While Section~\ref{sec:array_case_study} demonstrated topological
obstructions at fixed electrical size and
Section~\ref{sec:communication_theory} showed in closed form how the
physical modulation mechanism determines the volumetric exponent
$\nu$, this final case study investigates how frequency scaling and the
scattering regime affect metric volume and stability while the
intrinsic dimension remains fixed. The scattered field is computed
numerically using a method-of-moments (MoM) solver. Its geometry and
intrinsic dimension are therefore inferred directly from the generated
data. Before interpreting any threshold observed in the autoencoder
reconstruction-error curve, we introduce neural-network-independent
tools for assessing the manifold structure
(Section~\ref{subsec:diffusion_maps}).

\subsection{Canonical Problem: Scattering from a Cylinder}
\label{subsec:scattering_setup}

We consider scattering of a $TM_z$-polarized plane wave from a 2D
dielectric elliptic cylinder (semi-axes $b=0.3a$, major semi-axis $a$)
and permittivity $\eps_r=1.5$. The sweep parameter is the illumination
angle $\phi\in S^1$, restricting the intrinsic dimension to $d=1$.
For each electrical size $\beta a$, the illumination-angle sweep
generates the one-dimensional field manifold
\[
\mathcal E_{\beta a}
=
\left\{
F(\cdot;\phi,\beta a):\phi\in S^1
\right\}.
\]
The major semi-axis is scaled over eight electrical sizes
$\beta a\in\{1.57,\dots,12.57\}$
($a\in\{0.25,\dots,2.00\}\lambda$) for the autoencoder study, with
scattered far fields collected on $N_{\mathrm{obs}}=128$ probes on a
circle of radius $8\lambda$. For the volumetric-exponent measurement of
Section~\ref{subsec:entropy_results}, the sweep is continued to larger
electrical sizes, where the two scattering regimes separate.

To validate the volumetric exponent $\nu$ against theory, we bring in
the exact modal analysis of the circular cylinder
(Appendix~\ref{app:born}), used only for the metric-entropy measurement
of Section~\ref{subsec:entropy_results} and not for training. In the
perfectly conducting limit, the scattered field admits a closed-form
modal series and hence an exact prediction for the arclength scaling
exponent. As derived in Appendix~\ref{app:born}, the high-frequency
rapid phase variation forces the metric volume (arclength
$\ell_{\mathcal{E}}$) of the PEC far-field sweep to scale asymptotically
as
\begin{equation}
	\ell_{\mathcal{E}}
	=
	\mathcal{O}\big((\beta a)^{3/2}\big),
	\qquad
	\nu_{\mathrm{PEC}}
	=
	\frac{3}{2}.
	\label{eq:PEC_arclength}
\end{equation}
A weak-contrast dielectric companion ($\eps_r=1.05$) is simulated with
the same MoM solver as an independent numerical check. Over the
investigated range, this case remains within the weak-scattering regime
and is well described by the first-Born approximation, for which
Appendix~\ref{app:born} predicts
$\nu_{\mathrm{Born}}=5/2$. The $\eps_r=1.5$ dataset is moderately
contrasted and departs progressively from the Born regime as the
electrical size increases. As shown in
Section~\ref{subsec:entropy_results} and Appendix~\ref{app:born}, the
two contrast sweeps therefore sample different portions of the same
regime transition: the weak-contrast case remains predominantly near
the Born value $5/2$, whereas the moderate-contrast case crosses into
the geometric regime and approaches the PEC value $3/2$. The crossover
is organized by the optical phase-thickness
$\tau=(\eps_r-1)\beta a$.

The numerical simulations are carried out by concatenating the real
and imaginary parts of the scattered field into a 256-dimensional
input vector. The radiation operator is filtered via SVD at
$-60\,\mathrm{dB}$ to retain $N_{\mathrm{SVD}}$ modes, and the mapping
is learned via a deep autoencoder (3-run average) with a
$256\to256\to256\to L$ encoder and a
$L\to256\to256\to128\to N_{\mathrm{SVD}}$ decoder, with GELU
activations throughout.

\subsection{Diffusion Maps: Spectral Identification of Metric Entanglement}
\label{subsec:diffusion_maps}
We recall that $L_{\mathrm{AE}}$ provides only a feasibility result
(Sec.~\ref{subsec:data_driven_ae}): success shows that the target is
achievable at that latent dimension, whereas failure at a smaller
dimension does not prove impossibility.
To determine whether the apparent need for
$L>L_{\min}$ is related to the physics of wave propagation rather than
to insufficient training, we first analyze the raw scattered fields
independently of the neural network. We consider the dataset with
$\varepsilon_r=1.5$, used throughout this diagnostic, and examine its
intrinsic geometry and dimensionality before introducing any
architectural or training choices.
To this end, we employ two non-parametric, training-free methods. First, we use Diffusion Maps \cite{coifman2006diffusion} to visualize the manifold. This technique builds a Markov random walk using a Gaussian similarity measure based on the pairwise distances between field samples, effectively linking only local neighbors, and uses the leading eigenvectors of the resulting transition operator as new coordinates, reflecting the intrinsic geometry of the data. Second, to verify whether geometric distortion implies an actual increase in intrinsic dimension, we apply the Two-Nearest-Neighbors (TwoNN) estimator \cite{facco2017}. TwoNN infers the intrinsic dimension $d$ directly from local distance statistics, making it robust to curvature and embedding artifacts.
Every field sample below is colored by the incidence angle $\phi$ of the plane wave that generated it, using the hsv color key of Fig.~\ref{fig:figincidencelegend}; the same scale is reused, unchanged, in the diffusion-map projections and in the emergent latent orbits of Section~\ref{subsec:heatmap_results}, so that a color observed in one figure identifies the same physical illumination direction in the other.
\begin{figure}[t]
	\centering
	\includegraphics[width=0.6\linewidth]{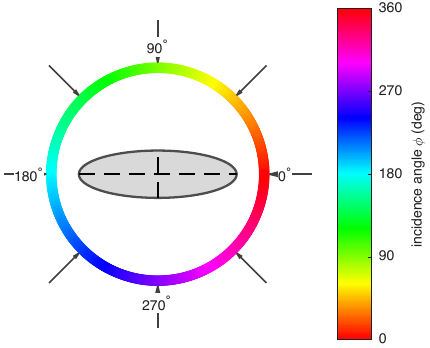}
	\caption{Color key linking the incidence angle $\phi$ of the incoming plane wave to the hsv color scale used throughout this section. Matching colors across Fig.~\ref{fig:dm} and Fig.~\ref{fig:orbits} identify the same illumination direction, making it possible to trace, by color alone, how the field generated by a given angle $\phi$ is represented in the learned latent space at each latent dimension $\LAE$.}
	\label{fig:figincidencelegend}
\end{figure}
\begin{figure}[t]
	\centering
	\includegraphics[width=1.0\linewidth]{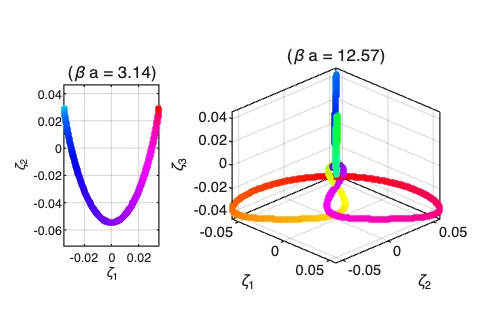}
	\caption{Diffusion Map projections of the scattered field manifold (color key: Fig.~\ref{fig:figincidencelegend}). Top: at low frequency ($\beta a = 3.14$), the manifold appears as an open arc in 2D, but is actually a closed loop folded onto itself by the mirror symmetry of the elliptic cross-section. Bottom: at the largest electrical size ($\beta a = 12.57$), rapid phase transport unfolds the loop globally but leaves a tight self-crossing in the dominant two coordinates, resolved in the chosen diffusion-map representation by including the third coordinate $\zeta_3$.}
	\label{fig:dm}
\end{figure}The results provide network-independent evidence that the observed $L>L_{\min}$ gap is associated with the physical geometry of the field manifold. At low frequency ($\beta a = 3.14$, Fig.~\ref{fig:dm}, top), the manifold projects into the 2D plane as an open arc; this is actually a closed loop folded onto itself. The elliptic cross-section is invariant under the point group $D_2$, so the scattered field obeys $F(\theta;2\pi-\phi)=F(-\theta;\phi)$ exactly, at every electrical size: the corresponding arcs of $\mathcal E$ are congruent. At low frequency the weakly directional pattern is additionally nearly even in $\theta$, so this exact congruence manifests itself as a near-coincidence $\Phi(\phi)\approx\Phi(2\pi-\phi)$: the mapping becomes effectively two-to-one and the branch $\phi\in[\pi,2\pi)$ retraces $\phi\in[0,\pi]$ almost exactly. At the largest electrical size ($\beta a = 12.57$, Fig.~\ref{fig:dm}, bottom), the pattern becomes strongly directional and is no longer nearly even in $\theta$: the exact congruence persists, but the near-coincidence of the two branches does not, unfolding the manifold globally but introducing a tight self-crossing in the dominant two coordinates; in the chosen diffusion-map representation, this fast phase wrapping is resolved by the third coordinate $\zeta_3$, needed where the two leading coordinates fail to separate points -- i.e., precisely at the crossing.
Crucially, despite this severe \emph{geometric} folding at higher frequencies, the TwoNN estimator consistently returns $d=1$ throughout the entire frequency sweep. Together, these tools establish independently of the autoencoder that high-frequency dynamics drastically convolute the manifold's \emph{shape}, but its intrinsic \emph{dimension} remains exactly one.

This rapid phase variation has a direct physical explanation: the
apparent aperture size, and hence the local spatial bandwidth, is
maximal broadside to the ellipse's major axis (around $90^\circ$ and
$270^\circ$) and minimal at endfire, so the field varies faster where
the aperture appears larger \cite{bucci1998representation}, consistent
with the symplectic phase-space picture of \cite{migliore2026unification}.

\subsection{Lipschitz Migration and Latent-Dimension Transition}
\label{subsec:heatmap_results}
Having characterized the field geometry independently of the neural
network, we now evaluate the physics-guided autoencoder on the
$\eps_r=1.5$ dielectric-cylinder datasets.
At $\LAE=1$, the reconstruction error remains above $-27$ dB for all
electrical sizes, consistently with the impossibility of continuously
embedding the closed field orbit into $\mathbb R$: a closed loop cannot
be injectively mapped onto a line, so at least $L_{\min}=2$ coordinates
are required by topology alone. At $\LAE=2$, the
$-60$ dB target is reached for $\beta a\le7.85$, but performance
degrades at larger sizes, reaching $-29.3$ dB at $\beta a=12.57$. In
this regime, the learned latent orbit becomes self-intersecting, in
qualitative agreement with the increasing metric folding observed by
the neural-network-independent Diffusion Map. With $\LAE\ge3$, the
trained models recover errors below $-62$ dB and produce smooth,
non-self-intersecting latent trajectories
(Fig.~\ref{fig:orbits}).

The physical origin of this transition is the same aspect-angle
mechanism identified in Sec.~\ref{subsec:diffusion_maps}. As the
electrical size grows, the pattern becomes strongly directional and the
near-coincidence of the two $D_2$-related branches is lifted, so that
distinct illumination angles that were nearly collapsed at low frequency
separate and their images cross in the plane.

In the tested representation, the third coordinate lifts the observed
self-intersection, consistently with the role of the diffusion-map
coordinate $\zeta_3$, where the two leading coordinates fail to separate
the sampled points. The observed benefit of the extra latent coordinate
is therefore consistent with the wave-physics-induced folding of the
field manifold, rather than being imposed by topology alone. Under the
adopted architecture, training procedure, and reconstruction threshold,
the topological minimum is $L_{\min}=2$, whereas $\LAE=3$ provides a
more stable, low-distortion representation at high frequency.

This is the reconstruction-side counterpart of the metric-volume growth
quantified in Sec.~\ref{subsec:entropy_results}: the same folding that
increases $\ell_{\mathcal E}$ also forces the decoder toward higher
latent dimension or, at fixed $\LAE$, toward larger
sensitivity---the Lipschitz migration of
Sec.~\ref{subsec:stable_widths_lipschitz}.
\begin{figure}[t]
	\centering
	\includegraphics[width=1.0\linewidth]{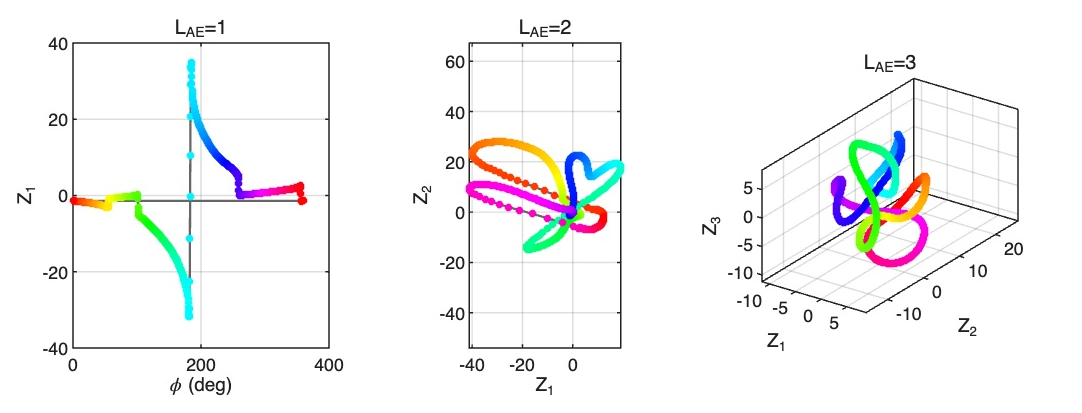}
	\caption{Learned latent trajectories at $\beta a=12.57$ and
		$\eps_r=1.5$, colored by incidence angle. For $\LAE=1$, the
		representation is discontinuous; for $\LAE=2$, the orbit closes
		but self-intersects; for $\LAE=3$, it becomes smooth and
		non-self-intersecting.}
	\label{fig:orbits}
\end{figure}
\begin{table}[t]
	\centering
	\caption{Linear NDF (threshold-dependent, Eq.~\eqref{eq:ndf-def}),
		intrinsic dimension, and smallest tested latent dimension attaining
		the $-60\,\mathrm{dB}$ reconstruction target for $\eps_r=1.5$.}
	\label{tab:metrics}
	\begin{tabular}{ccccc}
		\toprule
		$a/\lambda$ & $\beta a$ & Linear $\NDF$
		& $d$ & Smallest successful $\LAE$ \\
		\midrule
		0.25 & 1.57  & 2  & 1 & 2 \\
		1.00 & 6.28  & 9  & 1 & 2 \\
		1.50 & 9.42  & 13 & 1 & 3 \\
		2.00 & 12.57 & 18 & 1 & 3 \\
		\bottomrule
	\end{tabular}
\end{table}
Table~\ref{tab:metrics} shows that the linear $\NDF$ grows from $2$ to
$18$, while the intrinsic dimension remains $d=1$. The smallest tested
latent dimension meeting the prescribed target instead changes from
$\LAE=2$ to $\LAE=3$ at $\beta a=9.42$. Together with the independent
geometric and entropy analyses, these results indicate that increasing
electrical size can raise the latent dimension required by the trained
representation without changing the intrinsic dimension. They do not,
however, prove that $\LAE=3$ is the minimum possible stable dimension.

\subsection{Metric Entropy: Parallel Slopes and Volumetric Scaling}
\label{subsec:entropy_results}
\begin{figure}[t]
	\centering
	\includegraphics[width=1.0\linewidth]{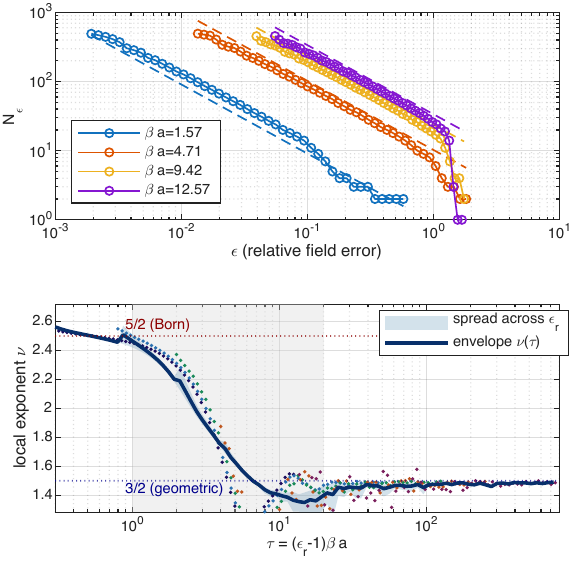}
	\caption{Top: metric entropy $H_\eps = \log_2 N_\eps$ versus relative field
		error for several $\beta a$ (dielectric, $\eps_r=1.5$). Dashed lines track
		the theoretical covering law $N_\eps = \ell_{\mathcal{E}}/(2\eps)$; the
		mutually parallel slopes confirm $d=1$, with the vertical offset tracking
		the metric-volume growth. Bottom: local volumetric exponent $\nu$ from the
		exact modal coefficients of the circular dielectric cylinder, plotted
		against the optical phase-thickness $\tau=(\eps_r-1)\beta a$ for contrasts
		$\eps_r\in[1.05,4]$ (dots), with the mean envelope (solid) and the spread
		across contrasts (shaded band). The exponent interpolates between the
		Born value $5/2$ and the geometric-scattering value $3/2$.}
	\label{fig:entropy}
\end{figure}
The preceding neural-network-independent analysis shows that the
scattered-field manifold remains one-dimensional as electrical size
increases. We further test \eqref{eq:intro-entropy} by directly
estimating the covering numbers of the $\eps_r=1.5$ datasets
(Fig.~\ref{fig:entropy}, top). The resulting curves are approximately
parallel, with a common slope consistent with $d=1$, while their
vertical shift reflects the arclength growth predicted by
\eqref{eq:intro-entropy-nu}.

Effective finite-range exponents obtained from the arclength scaling
over the eight-size MoM sweep ($\beta a\le12.57$) are strongly
pre-asymptotic: a single log--log fit returns
$\nu_{\mathrm{PEC}}\approx1.44$ and slopes of approximately $2.77$ and
$2.67$ for $\eps_r=1.05$ and $\eps_r=1.5$. Extending the same sweep to
$\beta a=31.4$ resolves the two regimes. The local exponent of the
weak-contrast case oscillates about $5/2$ throughout, and windowed fits
restricted to $\beta a\ge10$ give $\nu=2.53$, so this dataset remains
optically thin over the whole range ($\tau\le1.6$). 
The $\eps_r=1.5$ case instead decreases monotonically from $2.71$ to
$1.52$ as $\tau$ grows from $0.8$ to $15.7$, crossing into the
optically thick regime and recovering the geometric exponent $3/2$.
The PEC local exponent, computed on the same geometries and the same
observation circle, stays at $1.43$--$1.46$ throughout, which excludes
a systematic finite-range bias as the origin of the observed decrease.
The similar single-slope fits over the short sweep are therefore an
artefact of the restricted range and do not indicate a shared
asymptotic law.

To account for the two regimes, we resolve the local exponent using the
exact modal coefficients of the circular cylinder
(Appendix~\ref{app:born}). Although this auxiliary canonical geometry
differs from the elliptic MoM benchmark, the two agree to within $4\%$ in the PEC limit. For the
circular-cylinder modal model, the local exponent is resolved as a
function of the optical phase-thickness
$\tau=(\eps_r-1)\beta a$, which measures the phase delay accumulated
by the wave across the cylinder. The local exponents obtained across
contrasts $\eps_r\in[1.05,4]$ approximately collapse onto a common
curve, interpolating between two physically distinct regimes.

For $\tau\lesssim1$, the cylinder is optically thin: the scattered
field is described by the first-Born polarization current filling the
whole cross-section, whose $\beta^2$ prefactor adds one power of
$\beta a$ to the modal growth and yields the volumetric exponent
$\nu_{\mathrm{Born}}=5/2$. For
$\tau\gtrsim\mathcal{O}(10)$, the cylinder is optically thick: the
wave no longer traverses the cross-section coherently, the
band-averaged modal reflectivity saturates to the same
$\mathcal{O}(1)$ order as the PEC, and the geometric exponent $3/2$
of \eqref{eq:PEC_arclength} is recovered, independently of contrast.

Thus, the intrinsic dimension remains
$d=1$, while the metric-volume growth is selected by the relevant
scattering regime.

\section{Empirical Evidence for Decoder-Sensitivity Scaling}
\label{sec:stability_validation}

The stability bound of Proposition~\ref{prop:lipschitz} is the central
theoretical result of this paper. We therefore devote this section to
an extensive numerical investigation of its observable consequences.
In particular, we test whether the smallest decoder-sensitivity proxy
scales linearly with metric length and whether accuracy-qualified
trained models satisfy an explicit sufficient proxy threshold for the
test manifold.

\subsection{Test Manifold and Protocol}

The closed manifolds of Secs.~\ref{sec:array_case_study} and
\ref{sec:one_dimensional_manifolds} require $\Lmin>d$. To examine the
dimension-matched case, we restrict the steering family of
Sec.~\ref{sec:communication_theory} to
$\psi\in[-\psi_{\max},\psi_{\max}]$, $\psi_{\max}=0.12$~rad, over which
the sampled field family is a regular embedded arc with
$\Lmin=L=d=1$, so
\begin{equation}
	\Gamma_{\mathrm{dec}}\geq c_1\,\mathrm{Vol}_1(\mathcal E).
	\label{eq:prop1-arc}
\end{equation}
The sector changes only the prefactor of the metric length,
$\mathrm{Vol}_1\propto N^{3/2}$; the measured exponent is $1.510$
against the predicted $3/2$.

The constant in \eqref{eq:prop1-arc} can be bounded explicitly. With
$A(\rho)\coloneqq\sup_x\mathrm{Vol}_1(\mathcal E\cap B(x,\rho))$
%\label{eq:local-arc-content}
the maximum arclength in a ball of radius $\rho$, dense resampling
(sample-centred balls of radius $4\eps$, an upper bound for
arbitrary-centred balls of radius $2\eps$) gives $A(2\eps)\leq8\eps$
over $N\in\{4,8,16,32\}$, hence $N_{2\eps}\geq\mathrm{Vol}_1/(8\eps)$,
i.e.\ $C_0=3$ in Appendix~A. With $c_L=1$,
Proposition~\ref{prop:lipschitz} yields
\begin{equation}
	\Gamma_{\mathrm{dec}}\geq\tfrac{1}{16}\mathrm{Vol}_1(\mathcal E),
	\qquad
	\eps\leq\tfrac{1}{16}\mathrm{Vol}_1(\mathcal E),
	\label{eq:prop1-explicit}
\end{equation}
the latter condition being tightest at $N=4$ ($\mathrm{Vol}_1/16=0.105$), well above the accuracy
gate $\eps\leq5\times10^{-3}$ used below -- looser than the
$-60\,\mathrm{dB}$ target of Secs.~\ref{sec:array_case_study} and
\ref{sec:one_dimensional_manifolds} because $L=d=1$ is fixed here by
construction, so the gate only discards untrained networks rather
than discriminating between candidate latent dimensions. The
arclength parametrisation
itself achieves $\operatorname{Lip}(\Phi_{\mathrm{dec}})=1$,
$\operatorname{diam}(\Psi_{\mathrm{enc}}(\mathcal E))=\mathrm{Vol}_1$,
so $\mathrm{Vol}_1$ is an achievable reference, not the lower bound.

Autoencoders with $L=1$ were trained on the SVD representation
of~\eqref{eq:svd_rec}, with the same decoder architecture and loss.
The constrained runs additionally use a bounded $\tanh$ latent
coordinate to prevent compensation by latent dilation.
An unconstrained baseline defines
$\Pi_{\mathrm{unc}}=\prod_{\ell=1}^3\|\mathbf W_{\ell,\mathrm{unc}}\|_2$.
Further sets enforce by spectral projection
$\|\mathbf W_\ell\|_2\le\sigma_{\mathrm{cap}}
=(f\,\Pi_{\mathrm{unc}})^{1/3}$, hence
$\prod_{\ell=1}^3\|\mathbf W_\ell\|_2\le f\,\Pi_{\mathrm{unc}}$.\footnote{%
	We use $f\in\{10^{-3},3\times10^{-3},10^{-2},3\times10^{-2},
	10^{-1},3\times10^{-1},1\}$ and recalibrate $\Pi_{\mathrm{unc}}$
	at each aperture.}

Since the global Lipschitz constant cannot be evaluated exactly, we use
\begin{equation}
	\widehat L_{\mathrm{dec}}
	=
	\max\!\left\{
	\max_{i\ne j}
	\frac{\|\widehat f_i-\widehat f_j\|}{\|z_i-z_j\|},
	\max_i\|D\widetilde\Phi_{\mathrm{dec}}(z_i)\|
	\right\},
	\label{eq:lip-hat}
\end{equation}
where $z_i=\Psi_{\mathrm{enc}}(f_i)$ and
$\widehat f_i=\widetilde\Phi_{\mathrm{dec}}(z_i)$, and set
$\widehat\Gamma_{\mathrm{dec}}
=\widehat L_{\mathrm{dec}}
\operatorname{diam}(\Psi_{\mathrm{enc}}(\mathcal E))$.
Hence $\widehat\Gamma_{\mathrm{dec}}\le\Gamma_{\mathrm{dec}}$;
values above $\mathrm{Vol}_1/16$ certify
\eqref{eq:prop1-explicit}, while values below it are inconclusive.

\subsection{Observed Scaling and Proxy Certification}

Table~\ref{tab:prop1} reports, at each aperture, the smallest sampled
proxy among accuracy-qualified pairs. Over a $23\times$ span in
$\mathrm{Vol}_1$, the minimum proxy scales as
$\widehat\Gamma_{\mathrm{dec}}\propto\mathrm{Vol}_1^{1.123\pm0.198}$
(95\% CI, Student-$t$ with $n-2=2$ degrees of freedom), consistent
with the linear dependence predicted by Proposition~\ref{prop:lipschitz}.

\begin{table}[t]
	\centering
	\caption{Best accuracy-qualified pair at each aperture on the arc
		manifold ($L=d=1$).}
	\label{tab:prop1}
	\begin{tabular}{cccccc}
		\hline
		$N$ & $\beta L_{\mathrm{ap}}$ & $\mathrm{Vol}_1$
		& $\min\widehat{\Gamma}_{\mathrm{dec}}$
		& ratio & qualified/total \\
		\hline
		4  & 12.6  & 1.68  & 1.76  & 1.044 & 11/29 \\
		8  & 25.1  & 4.87  & 5.17  & 1.061 & 11/29 \\
		16 & 50.3  & 13.87 & 16.00 & 1.154 & 10/29 \\
		32 & 100.5 & 39.29 & 61.67 & 1.570 & 5/29  \\
		\hline
	\end{tabular}\\[2pt]
	\footnotesize Fit: $1.123\pm0.198$ (95\% CI, $t_{0.975,2}$), prefactor $0.917$.
\end{table}

Of the $116$ trained pairs, $48$ lie below the arclength reference and
reconstruct at best to $-25.8$~dB, versus $-75.8$~dB above it
(Fig.~\ref{fig:prop1}(b)). Five pairs (three at $N=4$, two at $N=8$)
have $\widehat\Gamma_{\mathrm{dec}}<\mathrm{Vol}_1/16$ but reach only
$-0.30$~dB, so none is accuracy-qualified. Every qualified pair thus
satisfies $\widehat\Gamma_{\mathrm{dec}}\geq\mathrm{Vol}_1/16$, a
sufficient proxy-based certification of \eqref{eq:prop1-explicit} ---
finite-sample evidence consistent with the bound, not a proof of
global optimality.

\begin{figure}[t]
	\centering
	\includegraphics[width=1.0\linewidth]
	{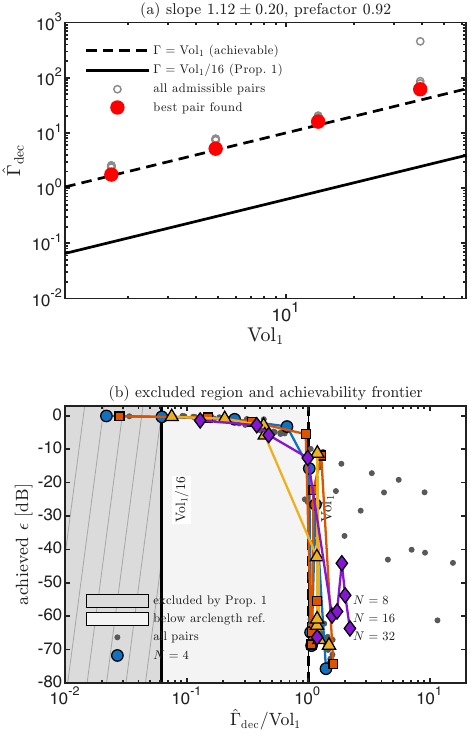}
\caption{Decoder-sensitivity results on the steering arc
	($L=d=1$, $\nu=3/2$). Upper: sampled proxy for qualified pairs
	(grey) and its per-aperture minimum (red) vs.\ $\mathrm{Vol}_1$;
	dashed line the achievable reference, solid line the
	certification threshold $\mathrm{Vol}_1/16$. Lower: all $116$
	pairs and the per-aperture frontiers; the hatched band is the
	region excluded by \eqref{eq:prop1-explicit}, the lighter band
	lies between that threshold and the arclength reference and is
	not excluded. No qualified pair lies below the threshold.}
	\label{fig:prop1}
\end{figure}
At $N=64$, only $1$ of $28$ runs met the accuracy target, so this case is omitted from Table~\ref{tab:prop1} and Fig.~\ref{fig:prop1} and interpreted as a trainability limit of the $L=1$ architecture. Reported sensitivities are on-manifold proxies and therefore characterize the trained models and sampled orbit, not global attainability.

\section{Conclusion}
\label{sec:conclusion}

This paper extends classical EIT to parametric field families by
separating two notions that the linear NDF merges: the intrinsic
dimension $d$, counting the locally independent directions of field
variation, and the electrical-size-dependent growth of the intrinsic
metric volume, quantified by the volumetric exponent $\nu$. The key
consequence is that nonlinear compression does not remove
high-frequency complexity; at fixed reconstruction accuracy, it
reappears as the sensitivity required of the decoder, yielding the
Lipschitz-migration trade-off among latent dimension, accuracy, and
normalized decoder sensitivity $\Gamma_{\mathrm{dec}}$. Equivalently,
a parameter-constrained front end keeps the number of controls fixed
as the aperture grows, but the price of scaling is paid in metric
volume and decoder-stability burden rather than in additional control
dimensions.

The same geometry provides a bridge to estimation theory: under
additive Gaussian noise, the pullback metric is proportional to the
Fisher information matrix, so metric growth directly affects
estimation accuracy. The array and scattering examples show that
systems with the same intrinsic dimension can nevertheless exhibit
different complexity growth, depending on the underlying physical
modulation or scattering mechanism rather than on parameter count
alone. In particular, phase-driven steering and polarization rotation
produce different directional scaling laws within the same array
architecture, while cylindrical scattering exhibits distinct
metric-volume growth in the Born and geometric regimes even though
the intrinsic dimension remains unchanged. At the opposite extreme,
a saturating physical control may exhibit negligible metric growth
despite retaining a nonzero intrinsic dimension. Thus, decoder
stability, metric entropy, and estimation accuracy provide
complementary operational interpretations of the same geometric
structure.

More broadly, this work suggests studying EIT and ESIT through a common
geometric language, connecting electromagnetic modeling with information
theory, information geometry, estimation theory, and machine-learning
representations. Rather than treating learned representations as black
boxes, the proposed approach interprets their performance in terms of the
geometry imposed by the underlying physics. These connections may support
future developments in reduced models, sensing, inverse problems, and
physics-guided artificial intelligence for electromagnetic systems.

\appendices

\section{Lipschitz Migration: Lower Bound on Decoder Sensitivity}
\label{app:minimum_constant}

The argument follows the classical covering-number technique for
Lipschitz images \cite{cohen2022optimal}, specialized to
\eqref{eq:intro-entropy}. Throughout, we assume that the considered
electrical-size family $\{\mathcal E_s\}$ has a fixed intrinsic
dimension $d\ge 1$, and that there exist constants
$\varepsilon_0>0$, $C_0<\infty$, and $V_{\min}>0$, independent of
$s$, such that
\begin{equation}
	H_{2\varepsilon}(\mathcal E_s)
	\;\ge\;
	d\log_2\!\left(\frac{1}{\varepsilon}\right)
	+\log_2\mathrm{Vol}_d(\mathcal E_s)-C_0
	\label{eq:uniform-entropy}
\end{equation}
for every $0<\varepsilon<\varepsilon_0$, and
$
\mathrm{Vol}_d(\mathcal E_s)\ge V_{\min}.
$

Fix $s$ and write $\mathcal E=\mathcal E_s$. Let
$(\Psi_{\mathrm{enc}},\Phi_{\mathrm{dec}})$ be an admissible pair
achieving
\[
\sup_{f\in\mathcal E}
\left\|
f-\Phi_{\mathrm{dec}}
\bigl(\Psi_{\mathrm{enc}}(f)\bigr)
\right\|_{\mathcal H_E}
\le\varepsilon
\]
with normalized sensitivity
$
\Gamma_{\mathrm{dec}}
(\Psi_{\mathrm{enc}},\Phi_{\mathrm{dec}})
\le\bar\Gamma.
$

For sufficiently small $\varepsilon$, uniformly in $s$, the
right-hand side of \eqref{eq:uniform-entropy} is strictly positive, so
that
$
N_{2\varepsilon}(\mathcal E)>1.
$
Consequently, no admissible pair achieving accuracy $\varepsilon$ can
have a constant latent image or a constant decoder. Hence,
\[
\operatorname{diam}
\bigl(\Psi_{\mathrm{enc}}(\mathcal E)\bigr)>0,
\qquad
\operatorname{Lip}(\Phi_{\mathrm{dec}})>0.
\]

By the rescaling invariance of $\Gamma_{\mathrm{dec}}$ established in
\eqref{eq:normalized-sensitivity}, we may therefore normalize
$\Psi_{\mathrm{enc}}$ so that
$
\operatorname{diam}
\bigl(\Psi_{\mathrm{enc}}(\mathcal E)\bigr)=1
$
without loss of generality. After this rescaling, and relabeling the
maps for simplicity, we define
$
\gamma
\coloneqq
\operatorname{Lip}(\Phi_{\mathrm{dec}})
=
\Gamma_{\mathrm{dec}}.
$
The normalized latent image
$\Psi_{\mathrm{enc}}(\mathcal E)$ is thus a bounded subset of
$\mathbb R^L$ of unit diameter.

For any $r>0$, such a set can be covered by at most
\[
c_L\left(1+\frac{1}{r}\right)^L
\]
balls of radius $r$, where $c_L$ depends only on $L$. Unlike the
small-$r$ estimate $c_Lr^{-L}$, this form remains valid, although
possibly loose, also for $r\ge1$, for which a single ball already
suffices.

Covering the normalized latent image by balls of radius
$\varepsilon/\gamma$ and pushing them forward through the
$\gamma$-Lipschitz decoder gives a covering of
$
\Phi_{\mathrm{dec}}
\bigl(\Psi_{\mathrm{enc}}(\mathcal E)\bigr)
$
by balls of radius $\varepsilon$. 
For any $f\in\mathcal E$, let
$z_f=\Psi_{\mathrm{enc}}(f)$ and
$\widehat f=\Phi_{\mathrm{dec}}(z_f)$.
If $z_i$ is the center of a latent covering ball containing $z_f$,
then
$
\|f-\widehat f\|_{\mathcal H_E}\le\varepsilon
$
by the reconstruction assumption, whereas
$
\|\widehat f-\Phi_{\mathrm{dec}}(z_i)\|_{\mathcal H_E}
\le\varepsilon
$
by the Lipschitz property of the decoder. Hence, by the triangle
inequality,
$
\|f-\Phi_{\mathrm{dec}}(z_i)\|_{\mathcal H_E}
\le 2\varepsilon.
$
Therefore,
\begin{equation}
	N_{2\varepsilon}(\mathcal E)
	\;\le\;
	c_L
	\left(
	1+\frac{\gamma}{\varepsilon}
	\right)^L
	=
	c_L
	\left(
	1+\frac{\Gamma_{\mathrm{dec}}}{\varepsilon}
	\right)^L .
	\label{eq:decoder-covering-bound}
\end{equation}

Taking the base-2 logarithm and combining
\eqref{eq:decoder-covering-bound} with
\eqref{eq:uniform-entropy} gives, for every sufficiently small
$0<\varepsilon<\varepsilon_0$,
\begin{equation}
	d\log_2\!\left(\frac{1}{\varepsilon}\right)
	+\log_2\mathrm{Vol}_d(\mathcal E)-C_0
	\;\le\;
	L\log_2\!\left(
	1+\frac{\Gamma_{\mathrm{dec}}}{\varepsilon}
	\right)
	+\log_2c_L .
	\label{eq:appB-master}
\end{equation}

\emph{Proof of Proposition~\ref{prop:lipschitz} ($L=d$).}
In the formal dimension-matched case $L=d$, which presupposes that an
admissible continuous encoding into $\mathbb R^d$ exists, as discussed
in the topological caveat of
Sec.~\ref{subsec:stable_widths_lipschitz}, evaluate
\eqref{eq:appB-master} at $L=d$ and use
\[
\log_2\!\left(
1+\frac{\Gamma_{\mathrm{dec}}}{\varepsilon}
\right)
=
\log_2(\Gamma_{\mathrm{dec}}+\varepsilon)
+
\log_2\!\left(\frac{1}{\varepsilon}\right).
\]
The resolution terms $d\log_2(1/\varepsilon)$ then cancel, yielding
\begin{equation}
	\Gamma_{\mathrm{dec}}+\varepsilon
	\ge
	2^{-(C_0+\log_2c_d)/d}
	\left[
	\mathrm{Vol}_d(\mathcal E)
	\right]^{1/d}.
	\label{eq:app-dimension-matched}
\end{equation}

Since
$\mathrm{Vol}_d(\mathcal E_s)\ge V_{\min}$ uniformly in $s$, the
right-hand side of \eqref{eq:app-dimension-matched} is bounded below
by a positive constant independent of electrical size. Therefore, for
all sufficiently small $\varepsilon$, uniformly in $s$, the additive
term $\varepsilon$ can be absorbed, giving
\begin{equation}
	\Gamma_{\mathrm{dec}}
	\ge
	c_1
	\left[
	\mathrm{Vol}_d(\mathcal E)
	\right]^{1/d},
	\qquad
	c_1
	=
	\frac{1}{2}\,
	2^{-(C_0+\log_2c_d)/d}
	>0.
\end{equation}
The constant $c_1$ is independent of $\varepsilon$ and of electrical
size. Within this lower bound, the dependence on electrical size
therefore enters through the metric volume---the Lipschitz-migration
statement used in the main text.

\emph{Proof of Corollary~\ref{cor:stability-tradeoff} (general $L$).}
Inequality \eqref{eq:appB-master} holds for any integer $L\ge1$, not
only for $L=d$. Using again
\[
\log_2\!\left(
1+\frac{\Gamma_{\mathrm{dec}}}{\varepsilon}
\right)
=
\log_2(\Gamma_{\mathrm{dec}}+\varepsilon)
+
\log_2\!\left(\frac{1}{\varepsilon}\right)
\]
and rearranging \eqref{eq:appB-master} without setting $L=d$ gives
\begin{equation}
	\Gamma_{\mathrm{dec}}+\varepsilon
	\ge
	2^{-(C_0+\log_2c_L)/L}
	\varepsilon^{(L-d)/L}
	\left[
	\mathrm{Vol}_d(\mathcal E)
	\right]^{1/L}.
	\label{eq:app-general-L}
\end{equation}

Because $d\ge1$ and
$\mathrm{Vol}_d(\mathcal E_s)\ge V_{\min}$, the ratio of the
right-hand side of \eqref{eq:app-general-L} to $\varepsilon$ is bounded
below by a positive constant times
$
\varepsilon^{-d/L},
$
which diverges uniformly as $\varepsilon\to0$. Hence, for all
sufficiently small $\varepsilon$, uniformly in $s$, the additive term
$\varepsilon$ can be absorbed, yielding
\begin{equation}
	\Gamma_{\mathrm{dec}}
	\ge
	c_1(L)\,
	\varepsilon^{(L-d)/L}
	\bigl[\mathrm{Vol}_d(\mathcal E)\bigr]^{1/L},
\end{equation}
where
$c_1(L)\coloneqq
2^{-1-(C_0+\log_2 c_L)/L}>0$.
This is \eqref{eq:stability-tradeoff}, valid for all sufficiently small
$\varepsilon$, with $c_1(L)$ independent of electrical size. Setting
$L=d$ recovers Proposition~\ref{prop:lipschitz}.

\section{Derivation of the Polarimetric Manifold Area}
\label{app:dualpol}

Consider the simplified model
$\mathbf E(u;\psi,\chi)=\mathbf p(\chi)F(u;\psi)$, where
$(\psi,\chi)$ are the control parameters,
$\mathbf p(\chi)=(\cos\chi,\sin\chi)^{\mathsf T}$, and
\begin{equation}
	F(u;\psi)
	=
	\sum_{n=0}^{N-1}
	e^{\mathrm j\beta x_n(u-\sin\psi)}
\end{equation}
is the array factor.

The pullback metric \eqref{eq:pullback-explicit} induced on
$(\psi,\chi)$ yields
\begin{equation}
	g_{\psi\chi}
	=
	\Re\left[
	\bigl(\mathbf p^{\mathsf T}\partial_\chi\mathbf p\bigr)
	\langle \partial_\psi F,F\rangle
	\right]
	=
	0,
\end{equation}
since
\[
\mathbf p^{\mathsf T}\partial_\chi\mathbf p
=
\cos\chi(-\sin\chi)+\sin\chi\cos\chi
=
0.
\]
The metric is therefore diagonal, and its area element factorizes as
$
\mathrm d\mathrm{Vol}_2
=
\norm{\partial_\psi\mathbf E}_{\mathcal H_E}
\norm{\partial_\chi\mathbf E}_{\mathcal H_E}
\,\mathrm d\psi\,\mathrm d\chi.
$

Since $\mathbf p(\chi)$ and
$\partial_\chi\mathbf p=(-\sin\chi,\cos\chi)^{\mathsf T}$ are unit
vectors,
\[
\norm{\partial_\chi\mathbf E}_{\mathcal H_E}
=
\norm{F(\cdot;\psi)}_{\mathcal H_E}.
\]
On a uniformly spaced, DFT-compatible observation grid spanning one
complete spatial-frequency period without endpoint duplication, with
$N\le N_{\mathrm{obs}}$, the sampled array modes are discretely
orthogonal. Hence, all cross terms vanish and
\begin{equation}
	\begin{aligned}
		\norm{F(\cdot;\psi)}_{\mathcal H_E}^2
		&=
		\sum_{m=1}^{N_{\mathrm{obs}}}
		\left|
		\sum_{n=0}^{N-1}
		e^{\mathrm j\beta x_n(u_m-\sin\psi)}
		\right|^2
		\\
		&=
		N_{\mathrm{obs}}N .
	\end{aligned}
	\label{eq:dp-Fnorm}
\end{equation}
Thus, the polarization metric speed is independent of both $\psi$ and
$\chi$. For a fixed polarization range
$0\le\chi\le\chi_{\max}$, with $\chi_{\max}$ independent of the
aperture,
\begin{equation}
	\begin{aligned}
		\ell_\chi
		&=
		\int_0^{\chi_{\max}}
		\norm{\partial_\chi\mathbf E}_{\mathcal H_E}
		\,\mathrm d\chi
		\\
		&=
		\chi_{\max}\norm{F}_{\mathcal H_E}
		=
		\Theta\bigl((\beta L_{\mathrm{ap}})^{1/2}\bigr),
	\end{aligned}
\end{equation}
where $\beta L_{\mathrm{ap}}=\pi N$.

Over a fixed steering-angle range independent of the aperture,
\[
\ell_\psi
=
\Theta\bigl((\beta L_{\mathrm{ap}})^{3/2}\bigr)
\]
by \eqref{eq:prop-arclength}. Moreover,
$\norm{\partial_\psi\mathbf E}_{\mathcal H_E}$ is independent of
$\chi$, whereas
$\norm{\partial_\chi\mathbf E}_{\mathcal H_E}$ is independent of
$\psi$. Therefore, 
%over a fixed rectangular parameter domain on which
%the parametrization is injective, up to possible boundary
%identifications,
\begin{equation}
	\begin{aligned}
		\mathrm{Vol}_2(\mathcal E)
		&=
		\ell_\psi\ell_\chi
		\\
		&=
		\Theta\bigl((\beta L_{\mathrm{ap}})^{3/2}\bigr)
		\Theta\bigl((\beta L_{\mathrm{ap}})^{1/2}\bigr)
		\\
		&=
		\Theta\bigl((\beta L_{\mathrm{ap}})^2\bigr).
	\end{aligned}
\end{equation}
Hence, $\nu_{\mathrm{area}}=2$.

\section{Arclength Scaling for Cylindrical Scatterers: PEC and Born Regimes}
\label{app:derivation}
\label{app:born}

\paragraph{PEC cylinder}
For a circular PEC cylinder of radius $a$, illuminated by a plane wave
incident at angle $\phi$, the far-field scattering pattern is
\cite{harrington1961time}
\begin{equation}
	F(\theta;\phi)
	=
	-\sum_{n=-\infty}^{\infty}
	R_n\,e^{\mathrm{j}n(\theta-\phi)},
	\qquad
	R_n
	=
	\frac{J_n(\beta a)}{H_n^{(2)}(\beta a)} .
\end{equation}
Differentiation with respect to $\phi$ gives
\begin{equation}
	\label{eq:app_tangent_norm}
	\norm{\partial_\phi F}_{\mathcal H_E}^2
	=
	2\pi
	\sum_{n=-\infty}^{\infty}
	n^2|R_n|^2 .
\end{equation}

In the oscillatory region
$|n|\le(1-\delta)\beta a$, for any fixed $0<\delta<1$,
$|R_n|^2$ oscillates with average
$\langle|R_n|^2\rangle\sim1/2$, whereas in the evanescent region
$|R_n|^2$ decays exponentially. The transition around the turning point
$|n|\simeq\beta a$ has width
$\mathcal O((\beta a)^{1/3})$ and contributes only a lower-order term
to the weighted modal sum. Therefore, after oscillation averaging and
truncation at the effective band edge
$n_\star\approx\beta a$,
\begin{equation}
	\sum_{n=-\infty}^{\infty}
	n^2|R_n|^2
	\;\sim\;
	\sum_{n=-\lfloor\beta a\rfloor}^{\lfloor\beta a\rfloor}
	\frac{n^2}{2}
	\;\sim\;
	\frac{(\beta a)^3}{3},
\end{equation}
where the factor $1/2$ follows from the asymptotic evaluation of
$|R_n|^2$ through the Debye expansions of $J_n$ and $Y_n$
\cite{abramowitz1964handbook,colton1998inverse}. Since the metric speed
is independent of $\phi$, the total arclength over a full $2\pi$ sweep
satisfies
\begin{equation}
	\ell_{\mathcal E}
	=
	2\pi\norm{\partial_\phi F}_{\mathcal H_E}
	=
	\Theta\bigl((\beta a)^{3/2}\bigr),
\end{equation}
confirming \eqref{eq:PEC_arclength}. The corresponding
oscillation-averaged prefactor is
$2\pi\sqrt{2\pi/3}$.

For an elliptic cylinder, the loss of rotational symmetry makes the
metric speed $\phi$-dependent, and term-by-term differentiation is
replaced by the angular Mathieu identities. Since the spectral band
edge remains associated with the maximum electrical dimension of the
cross-section, $m_\star\approx\beta a$, the same cubic scaling of the
squared metric speed, and hence the same $3/2$ arclength exponent, is
expected, provided that the bulk Mathieu-mode amplitudes retain the
same asymptotic order. This is confirmed a posteriori by the MoM sweep
of Sec.~\ref{subsec:entropy_results}, which returns
$\nu_{\mathrm{PEC}}=1.44$ on the elliptic cross-section with $b=0.3a$
over $\beta a\le31.4$, within $4\%$ of the exact circular value $3/2$,
and reproduces the Born and geometric exponents of the dielectric
cases. The canonical circular model therefore captures the regime
dependence of the exponent, the aspect ratio entering the prefactor
rather than the scaling law.

\paragraph{Weak-contrast (first-Born) regime}

Let $z=\beta a$ and define the electric susceptibility
$
	\chi_{\mathrm e}:=\eps_r-1,
$
which here also represents the dielectric contrast. For a homogeneous
dielectric cylinder under $\mathrm{TM}_z$ illumination, the first-Born
approximation gives modal coefficients of the form
\cite{colton1998inverse}
\begin{equation}
	a_n^{\mathrm B}(\phi)
	=
	C_{\mathrm B}\,\chi_{\mathrm e}\,\beta^2 I_n
	e^{-\mathrm j n\phi},
	\qquad
	I_n
	=
	\int_0^a J_n^2(\beta r)\,r\,\mathrm dr ,
\end{equation}
where $C_{\mathrm B}$ is a convention-dependent constant independent
of $n$, $a$, $\beta$, and $\phi$. Consequently,
\begin{equation}
	\label{eq:born-tangent}
	\norm{\partial_\phi F_{\mathrm B}}_{\mathcal H_E}^{2}
	=
	2\pi |C_{\mathrm B}|^2
	\chi_{\mathrm e}^{\,2}\beta^4
	\sum_{n=-\infty}^{\infty} n^2 I_n^2 .
\end{equation}

Lommel's identity \cite{abramowitz1964handbook} gives
\begin{equation}
	\begin{aligned}
		I_n
		&=
		\frac{a^2}{2}
		\left[
		J_n^2(z)-J_{n-1}(z)J_{n+1}(z)
		\right]
		\\
		&=
		\frac{a^2}{2}
		\left[
		J_n'^2(z)
		+
		\left(1-\frac{n^2}{z^2}\right)J_n^2(z)
		\right].
	\end{aligned}
	\label{eq:born-lommel}
\end{equation}
Substituting the standard Debye large-order expansions of $J_n(z)$
and $J_n'(z)$, valid for $n=tz$, $|t|<1$
\cite{abramowitz1964handbook}, away from the turning point, and
setting
$
\kappa_n=\sqrt{z^2-n^2},
$
we obtain
\begin{equation}
	I_n
	\sim
	\frac{a^2\kappa_n}{\pi z^2}
	=
	\frac{a}{\pi\beta}
	\sqrt{1-\left(\frac{n}{z}\right)^2}.
\end{equation}
Using this approximation and replacing the bulk modal sum by an
integral in $t=n/z$ gives
\begin{equation}
	\begin{aligned}
		\sum_{n=-\infty}^{\infty} n^2 I_n^2
		&\sim
		\frac{a^2}{\pi^2\beta^2}
		z^3
		\int_{-1}^{1}t^2(1-t^2)\,\mathrm dt
		\\
		&=
		\frac{4a^2}{15\pi^2\beta^2}z^3 .
	\end{aligned}
\end{equation}
It follows from \eqref{eq:born-tangent} that, for the unnormalized
first-Born far field at fixed contrast,
\begin{equation}
	\norm{\partial_\phi F_{\mathrm B}}_{\mathcal H_E}^{2}
	=
	\Theta\!\left(\chi_{\mathrm e}^{\,2}z^5\right),
	\qquad
	\ell_{\mathcal E}^{(\mathrm B)}
	=
	\Theta\!\left(
	|\chi_{\mathrm e}|z^{5/2}
	\right),
\end{equation}
so that the formal fixed-contrast first-Born exponent is
$\nu_{\mathrm{Born}}=5/2$.

However, the Born approximation is not valid in the high-frequency limit
at fixed nonzero contrast. The excess phase delay accumulated across the
cylinder diameter is
$\tau_{\mathrm{ex}} = 2z(\sqrt{\eps_r}-1)
= \chi_{\mathrm e}z + \mathcal O(\chi_{\mathrm e}^{\,2}z)$,
so that the phase-thickness parameter used in the main text,
$\tau := \chi_{\mathrm e}z = (\eps_r-1)\beta a$, is its first-order
approximation. Born linearization requires the weak-phase condition
$|\tau| \ll 1$, which is violated at fixed nonzero $\chi_{\mathrm e}$ as
$z\to\infty$: preserving it demands that the contrast decrease with
electrical size. In the joint limit $\chi_{\mathrm e}=\eta/z$ with fixed
$0<\eta\ll1$, substituting into
$\ell_{\mathcal E}^{(\mathrm B)}=\Theta(|\chi_{\mathrm e}|z^{5/2})$ gives
$\ell_{\mathcal E}^{(\mathrm B)}=\Theta(\eta z^{3/2})$. The exponent $3/2$
therefore arises within the Born approximation only in this coupled
high-frequency/vanishing-contrast limit.

Since the finite-range MoM sweep of Sec.~\ref{subsec:entropy_results}
instead holds $\eps_r$ fixed while $\beta a$ increases, it follows a
fixed-contrast path, for which the formal first-Born benchmark is
$\nu_{\mathrm{Born}}=5/2$. The slopes fitted over that finite range are
effective pre-asymptotic exponents and are not expected to coincide
exactly with it.

{\small
	\bibliographystyle{IEEEtran}
	\bibliography{Biblio_NLNDF}
}

\end{document}